\def\etal{{\it et al. \/}}
\def\ie{i.e., }
\def\scalefig#1{\epsfxsize #1\textwidth}
\def\defeq{{\stackrel{\Delta}{=}}}
\newtheorem{theorem}{\bf Theorem}
\newtheorem{lemma}{\bf Lemma}
\newtheorem{define}{\bf Definition}
\newcommand{\BigO}[1]{\ensuremath{\operatorname{O}\bigl(#1\bigr)}}
\newcounter{algleo}
\newlength{\lefttab}
  {\trivlist
   \topsep=0pt\itemsep=0pt
   \addtolength{\lefttab}{1.25em}
   \leftskip=\lefttab}%
  {\endtrivlist}
\renewcommand\appendices{\par
  \setcounter{section}{0}
  \setcounter{subsection}{0}
  \renewcommand\thesection{Appendix~\Alph{section}}
}
\def\@maketitle{%
\newpage
\null
\vskip 0em%页面上方到题目的距离
\begin{center}%
\let \footnote \thanks
{\LARGE \@title \par}%
\vskip 1em%题目和作者间的距离
{\large
\lineskip .5em%
\begin{tabular}[t]{c}%
\@author
\end{tabular}\par}%
%\vskip 1em%作者和日期间的距离
%{\large \@date}%
\end{center}%
\par
\vskip -3em}%日期和正文间的距离
\newcommand{\SectionVspace}{-0.5em}
\newcommand{\SubsectionVspace}{-0.8em}
\newcommand{\zerodisplayskips}{%
  \setlength{\abovedisplayskip}{0.3pt}
  \setlength{\belowdisplayskip}{0.3pt}}
\appto{\normalsize}{\zerodisplayskips}
\appto{\small}{\zerodisplayskips}
\appto{\footnotesize}{\zerodisplayskips}
\begin{document}

\title{The Thinnest Path Problem}

\author{Jianhang Gao$^\dag$, Qing Zhao$^\dag$, Ananthram Swami$^\S$\\
$^\dag$University of California, Davis, CA 95616, \{jhgao,qzhao\}@ucdavis.edu\\
$^\S$Army Research Laboratory, Adelphi, MD 20783, a.swami@ieee.org}

\maketitle

\begin{abstract}
We\footnotetext{This work was supported by the Army Research Laboratory Network Science CTA under Cooperative Agreement W911NF-09-2-0053.} formulate and study the thinnest path problem in wireless ad hoc networks. The objective is to find a path from a source to its destination that results in the minimum number of nodes overhearing the message by a judicious choice of relaying nodes and their corresponding transmission power. We adopt a directed hypergraph model of the problem and establish the NP-completeness of the problem in 2-D networks. We then develop two polynomial-time approximation algorithms that offer $\sqrt{\frac{n}{2}}$ and $\frac{n}{2\sqrt{n-1}}$ approximation ratios for general directed hypergraphs (which can model non-isomorphic signal propagation in space) and constant approximation ratios for ring hypergraphs (which result from isomorphic signal propagation). We also consider the thinnest path problem in 1-D networks and 1-D networks embedded in 2-D field of eavesdroppers with arbitrary unknown locations (the so-called 1.5-D networks). We propose a linear-complexity algorithm based on nested backward induction that obtains the optimal solution to both 1-D and 1.5-D networks. This algorithm does not require the knowledge of eavesdropper locations and achieves the best performance offered by any algorithm that assumes complete location information of the eavesdroppers. 
\end{abstract}

\vspace{-1em}
\section{Introduction}
\subsection{The Thinnest Path Problem}
\label{sec:intro-thinnest}
We consider the \emph{thinnest path} problem in wireless ad hoc networks. For a given source and a destination, the thinnest path problem
asks for a path from the source to the destination that results in the minimum number of nodes hearing the message. Such a path
is achieved by carefully choosing a sequence of relaying nodes and their corresponding transmission power.

At the first glance, one may wonder whether the thinnest path problem is simply a shortest path problem with the weight of each hop given by the number of nodes that hear the message in this hop. Realizing that a node may be within the transmission range of multiple relaying
nodes and should not be counted multiple times in the total weight (referred to as the width) of the resulting path, we see that the thinnest
path problem does not have a simple cost function that is summable over edges. But rather, the width of a path is given by the cardinality of
the \emph{union} of all receiving nodes in each hop, which is a highly nonlinear function of the weight of each hop. One may then wonder whether
we can redefine the weight of each hop as the number of nodes that hear the message for the first time. Such a definition of edge weight indeed leads
to a summable cost function. Unfortunately, in this case, the edge weight cannot be predetermined until the thinnest path from the source to
the edge in question has already been established.

A more fundamental difference between the thinnest path and the shortest path problems is that the thinnest path from a single source to all other nodes in the network do not form a tree. In other words, the thinnest path to a node does not necessarily go through the thinnest path to any of its neighbors. The loss of the tree structure is one of the main reasons that the thinnest path problem is much more complex than the shortest path problem. Indeed, as shown in this paper, the thinnest path problem is NP-complete, which
is in sharp contrast with the polynomial nature of the shortest path problem.

Another aspect that complicates the problem is the choice of the transmission power at each node (within a maximum value that may vary
across nodes). In this case, the network cannot be modeled as a simple graph in which the neighbors of each node are prefixed. In this paper,
we adopt the \emph{directed hypergraph} model which easily captures the choice of different neighbor sets at each node. While a graph is given by
a vertex set $V$ and an edge set $E$ consisting of cardinality-2
subsets of $V$, a hypergraph~\cite{berge1976graphs} is free of
the constraint on the cardinality of an edge. Specifically,
any non-empty subset of $V$ can be an element (referred to as a hyperedge) of the edge set
$E$. Hypergraphs can thus capture group behaviors and higher-dimensional
relationships in complex networks that are more than a simple union
of pairwise relationships\cite{Ramanathan&Etal:11INFORCOM}. In a directed hypergraph~\cite{Gallo&Etal:93DAM}, each hyperedge is directed, going from
a source vertex to a non-empty set of destination vertices. An example is given in Fig.~\ref{fig:hypergraph}-(a) where we have $2$ directed hyperedges rooted at a source node $v$ with each hyperedge modeling a neighbor set of $v$ under a specific power. The directed hypergraph
model of the thinnest path problem is thus readily seen: rooted at each node are
multiple directed hyperedges, each corresponding to a distinct neighbor set feasible under the maximum transmission power
of this node. The problem is then to find a minimum-width hyperpath from the source to the destination where the width of a hyperpath is given by the cardinality
of the union of the hyperedges on this hyperpath.

\vspace{\SubsectionVspace}
\subsection{Main Results}
Based on the directed hypergraph formulation, we show that the thinnest path problem in $2$-D networks is NP-complete even under a simple disk propagation model. This result is established through a reduction from the minimum dominating set problem in graphs, a classic NP-complete problem. The most challenging part of this reduction is to show the reduced problem is realizable under a 2-D disk model. We further establish that even with a fixed transmission power at each node (in this case, the resulting hypergraph degenerates to a standard graph), the thinnest path problem is NP-complete. We then propose two polynomial-time approximation algorithms that offer $\sqrt{\frac{n}{2}}$ and $\frac{n}{2\sqrt{n-1}}$ approximation ratios for general directed hypergraphs (which can model non-isomorphic signal propagation
in space) and constant approximation ratios for ring hypergraphs (which result from isomorphic signal propagation). 

We also establish the polynomial nature of the problem in 1-D and 1.5-D networks,where a 1.5-D network is a 1-D network embedded in a 2-D field of eavesdroppers with arbitrary unknown locations. We propose an algorithm based on a nested backward induction (NBI) starting at the destination. We show that this NBI algorithm has $\BigO{n}$ time complexity. Since the size of the input date is $\BigO{n}$, the proposed algorithm is order optimal. It solves the thinnest path problem in both the 1-D and 1.5-D networks. In particular, no algorithm, even with the complete location information of the eavesdroppers, can obtain a thinner path than the NBI algorithm which does not require the knowledge of eavesdropper locations.

\vspace{\SubsectionVspace}
\subsection{Related Work}
There is a large body of literature on security issues in wireless ad hoc networks (see, for example, \cite{Anjum&Mouchtaris:book,Hu04}).
However, the thinnest path problem has not been studied in the literature except in \cite{Chechik&etal2012}. Chechik \etal studied the thinnest path (referred to as the secluded path in~\cite{Chechik&etal2012}) and the thinnest Steiner tree in graphs. They showed that the problem in a general graph is NP-complete and strongly inapproximable. They proposed an approximation algorithm with an approximation ratio of $\sqrt{\Delta}+3$ where $\Delta$ is the maximum degree of the graph. They further studied the problem in several special graph models including graphs with bounded degree, hereditary graphs, and planar graphs. However, their study focuses on the problem in topological graphs, whereas we focus on hypergraphs and geometric graphs.
 The results obtained in~\cite{Chechik&etal2012} do not apply to special hypergraphs satisfying certain geometric properties that result naturally from the communication problem studied in this paper. This paper also includes several new results on the thinnest path problem under the graph model. Specifically, we established the NP-completeness of the problem in $2$-D disk graphs and $3$-D unit disk graphs.  The results in~\cite{Chechik&etal2012} and this work thus complement each other to provide a more complete picture of the thinnest path problem under different (hyper)graph models.

In the general context of algorithmic studies in hypergraphs, Ausiello~\etal\cite{ausiello1992optimal} tackled the problem of finding the $\mu$-optimal hyperpath where $\mu$ is a general measure on hyperpaths that satisfies a certain monotone property. They established the NP-completeness of this problem for general measures. The thinnest path problem can be seen as an $\mu$-optimal traversal problem with the measure $\mu$ given by the number of vertices covered by the path. Since it is a special measure, their NP-completeness result developed under general measures does not apply. Furthermore, in many applications, the resulting hypergraphs have certain topological and/or geometrical properties, and the computational complexities under these special models require separate analysis.

Another realated problem is the shortest path problem in hypergraphs. The shortest path problem in hypergraphs remains a polynomial-time problem as its counterpart under the graph model. The static shortest hyperpath problem was considered by Knuth~\cite{Knuth:77IPL} and Gallo \etal~\cite{Gallo&Etal:93DAM} in which Dijkstra's algorithm for graph was
extended to obtain the shortest hyperpaths. Ausiello \etal proposed a dynamic shortest hyperpath algorithm for directed hypergraphs, considering only the incremental problem (\ie network changes contain only edge insertion and weight decrease) with the weights of all hyperedges limited to a finite
set of numbers~\cite{ausiello1990dynamic}. In~\cite{Gao&etal:12WiOpt}, Gao \etal developed the first fully dynamic shortest path algorithms for general hypergraphs. As discussed in ~\ref{sec:intro-thinnest}, the thinnest path problem is fundamentally different and significantly more complex than the shortest path problem.

The widest path problem has been well studied under the graph model\cite{Hu1961,Punnen91}, and the existing results can be easily extended to hypergraphs. The widest path problem asks for a path whose minimum edge weight along the path is maximized. In other words, the width of a path is given by the minimum edge weight on that path, which is different from the definition of path width in the thinnest path problem studied in this paper. As a consequence, the widest path problem is not the complement of the thinnest path problem. Since the tree structure is preserved in the widest path problem (\ie the widest path to a node must go through the widest path to one of its neighbors), it remains a polynomial time problem. The thinnest path problem, however, is NP-complete in general.

\vspace{\SectionVspace}
\section{Problem Formulation}\label{sec:formu}
\subsection{Basic Concepts of Directed Hypergraphs}

A \emph{directed hypergraph} $H=(V,E)$ contains a set $V$ of vertices and a set $E$ of directed hyperedges~\cite{Gallo&Etal:93DAM}\footnote{In \cite{Gallo&Etal:93DAM}, it is referred to as the forward hyperarcs.}. Each directed hyperedge $e\in E$ has a single source vertex $s_e$ and a non-empty set of destination vertices $T_e$.%=\{t^e_{i}\}$.

\begin{figure}[htbp]
\centering
\begin{psfrags}
\psfrag{a}[c]{(a)}
\psfrag{b}[c]{(b)}
\psfrag{c}[c]{(c)}
\scalefig{0.45}\epsfbox{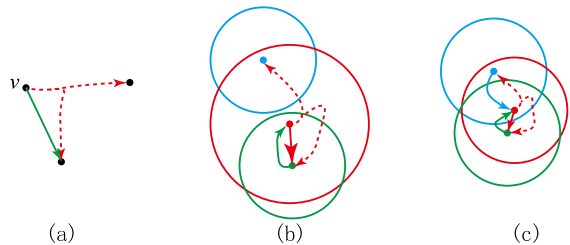}
\end{psfrags}
\caption{(a): Directed hypergraph; (b): Disk hypergraph; (c): Unit disk hypergraph.}
\label{fig:hypergraph}
\end{figure}

A \emph{disk hypergraph} is a special directed hypergraph whose topology is determined by a set of points $V=\{v_1,\ldots,v_n\}$ located in a $d$-dimensional Euclidean space and a maximum range $R_i$ associated with each vertex $v_i$. There exists a hyperedge $e$ from source $s_e=v_i$ to destination set $T_e$ if and only if $T_e$ consists of vertices located within the $d$-dimensional sphere centered at $v_i$ with a radius $r\in (0,R_i]$. A \emph{unit disk hypergraph} (UDH) is a disk hypergraph with unit maximum range ($R_i=1$) for all vertices. Fig.~\ref{fig:hypergraph} shows examples of a directed hypergraph, a disk hypergraph, and a unit disk hypergraph. 

A \emph{ring hypergraph} is a generalized disk hypergraph where associated with each vertex $v_i$ is a minimum range $r_i$ as well as a maximum range $R_i$. Hyperedges rooted at $v_i$ are formed by spheres centered at $v_i$ with radiuses satisfying $r_i< r\leq R_i$. It is easy to see that a disk hypergraph is a ring hypergraph with $r_i=0$, a disk \emph{graph} is a ring hypergraph with $r_i=R_i$ for all $i$, and a unit disk graph (UDG) is a ring hypergraph with $r_i=R_i=1$ for all $i$. 

\vspace{\SubsectionVspace}
\subsection{The Thinnest Path Problem }
Consider a wireless ad-hoc network with $n$ nodes located in a $d$-dimension Euclidean space. Each node can choose the power, within a maximum value, for the transmission of each message. The chosen power, along with the signal propagation model, determines the set of neighbors that can hear the message. The maximum transmission power is in general different across nodes. The objective is to find a path between a given source-destination pair that involves the minimum number of nodes hearing the message.

As discussed in Sec.~\ref{sec:intro-thinnest}, we formulate the problem using a directed hypergraph. Each node is a vertex. The directed hyperedges rooted at a node are given by distinct neighbor sets of this node feasible under its maximum transmission power and the signal propagation model. Under a general nonisomorphic propagation model, we end up with a general hypergraph. The only property the resulting hypergraph has is the monotonicity of the hyperedge set. Specifically, the hyperedges rooted at each node can be ordered in such a way (say, $e_1,e_2,\ldots,e_l$) that $T_{e_i}\subset T_{e_i+1}$ and $|T_{e_i}|=|T_{e_{i+1}}|-1$. This is due to the nature of wireless broadcasting where nodes reachable under transmission power $\eta$ can also be reached under any power greater than $\eta$. Under an isomorphic propagation model, we end up with a disk hypergraph. If all nodes have the same maximum range\footnote{Transmission range and transmission power are used interchangeably.}, we have a unit disk hypergraph. This hypergraph model also applies to networks with eavesdroppers. Each eavesdropper can be seen as a node with zero transmission range. It is thus a vertex with no outgoing hyperedges.

Given a source-destination pair $(s,t)$, a hyperpath from $s$ to $t$ is defined as a sequence of hyperedges $L=\{e_1,\ldots,e_m\}$ such that $s_{e_i}\in T_{e_{i-1}}$ for $1<i\leq m$, $s_{e_1}=s$, and $t\in T_{e_m}$. Define the cover $\widehat{L}$ of $L$ to be the set of vertices in $L$, \ie
\begin{align*}
\widehat{L}\defeq \cup_{i=1}^k T_{e_i}\cup \{s_{e_1}\},
\end{align*}
The width $W(L)$ is then given by
\begin{align*}
W(L)\defeq|\widehat{L}|.
\end{align*}
The thinnest path problem asks for a hyperpath from $s$ to $t$ with the minimum width. Note that choosing a hyperedge $e=\{s_e, T_e\}$
simultaneously chooses the relaying node $s_e$ and its transmission power (determined by $T_e$).

\vspace{\SectionVspace}
\section{NP-Complete Problems}\label{sec:npc}
In this section, we show that the thinnest path (TP) problem is NP-complete in several special geometric hypergraphs and graphs. This implies the NP-completeness of the problem in general directed hypergraphs. 

\vspace{\SubsectionVspace}
\subsection{TP in $2$-D Disk Hypergraphs}\label{sec:npc-2ddh}
In this subsection, we prove the NP-completeness of the thinnest path problem in $2$-D disk hypergraphs. While a stronger result is shown in the next subsection, the proof of this result provides the main building block for the proof of the next result.

The result is established through a reduction from the maximum dominating set (MDS)\cite{Garey&Johnson1979} problem. The MDS problem asks for the minimum subset of vertices in a given graph such that every vertex in the graph is either in the subset or a direct neighbor of one vertex in the subset. The following theorem formally establishes the polynomial reduction (denoted by $\leq_P$) from MDS to TP in $2$-D disk hypergraphs. Since the thinnest path problem is clearly in the NP space, this theorem establishes the NP-completeness of TP in 2-D disk hypergraphs.

\vspace{0.5em}
\begin{theorem}
MDS $\leq_P$ TP in $2$-D disk hypergrpahs.
\label{thm:NP-2ddh}
\end{theorem}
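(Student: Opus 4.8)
The plan is to establish a polynomial-time reduction from the minimum dominating set (MDS) problem to the thinnest path (TP) problem in $2$-D disk hypergraphs. Given an arbitrary graph $G=(V_G,E_G)$ as an instance of MDS, I would construct, in polynomial time, a $2$-D disk hypergraph $H$ together with a source-destination pair $(s,t)$ such that $G$ has a dominating set of size at most $k$ if and only if $H$ admits a hyperpath from $s$ to $t$ of width at most some threshold that depends on $k$ and the fixed gadget structure. The intuition guiding the construction is that the vertices chosen as relays on a thinnest hyperpath should play the role of the dominating set: when a relay node transmits, the set of nodes it ``covers'' (its hyperedge destination set) should correspond exactly to a vertex of $G$ together with its closed neighborhood, so that covering all of $G$ by a short path mirrors dominating all of $G$ by a small vertex subset.

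Concretely, I would first build an abstract (non-geometric) directed hypergraph that encodes the adjacency structure of $G$: introduce a relay vertex $u_i$ for each $v_i\in V_G$, and design hyperedges so that transmitting from $u_i$ reaches a set of ``coverage'' vertices representing $N[v_i]$ (the closed neighborhood). I would arrange the source $s$, the relays, and the destination $t$ so that any $s$--$t$ hyperpath must pass through a sequence of relays whose associated closed neighborhoods must jointly cover a designated ``universe'' block of vertices; minimizing the width of the path then forces minimizing the number of distinct relays used, which is exactly the size of a dominating set. The width bookkeeping must exploit the fact (noted in the problem formulation) that a vertex within range of multiple relays is counted only once in the union, so that overlap of neighborhoods is ``free'' — precisely the combinatorial feature that makes domination, rather than disjoint covering, the right analogue.

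The main obstacle — and, as the excerpt itself flags, the genuinely hard part — is \emph{realizability} under the $2$-D disk propagation model. The abstract hypergraph above is easy to specify combinatorially, but I must actually place all vertices as points in the Euclidean plane and assign maximum ranges $R_i$ so that the required hyperedges (and \emph{only} those, up to the monotone enlargements that the disk model inevitably forces) arise as sets of points falling within some disk centered at the relay. This is delicate because disk hypergraphs cannot realize arbitrary set systems: the achievable destination sets are constrained to be nested sequences of geometrically determined balls, and any unwanted vertex lying inside a disk gets covered whether or not the reduction wants it. I would therefore devote most of the effort to a careful geometric gadget design, using distance scales that separate the ``neighborhood'' vertices to be covered from spurious vertices, and I would verify that the monotonicity property of hyperedges does not let a cheaper, unintended hyperpath slip through.

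Having fixed the construction, I would complete the proof in two directions. For the forward direction, given a dominating set $D$ of size $k$ in $G$, I would exhibit an explicit $s$--$t$ hyperpath that activates exactly the relays corresponding to $D$ (choosing transmission ranges to cover $N[v_i]$) and compute that its width meets the target threshold. For the reverse direction, given an $s$--$t$ hyperpath of width at most the threshold, I would argue that its relays must induce a dominating set of $G$ of size at most $k$: the width bound, combined with the geometric guarantee that each relay covers at most one closed-neighborhood block, limits the number of relays, while the requirement that the universe block be fully covered forces the corresponding vertices to dominate $G$. Finally I would note that the construction has size polynomial in $|V_G|$ and membership of TP in NP is immediate (a hyperpath is a polynomial-size certificate whose width is checkable in polynomial time), which together yield NP-completeness.
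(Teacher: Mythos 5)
Your high-level framing (reduce from MDS, exploit the union counting of width, fight the geometric realizability) matches the paper, but your abstract reduction is built the wrong way around, and this is a genuine gap. You identify the \emph{relays on the path} with the \emph{dominating set members} and then require that ``any $s$--$t$ hyperpath must pass through a sequence of relays whose closed neighborhoods jointly cover a designated universe block.'' The thinnest path problem has no mechanism to impose such a covering requirement: the only hard constraint is hop-by-hop connectivity ($s_{e_{i+1}}\in T_{e_i}$), and coverage enters the problem only as a \emph{penalty} in the objective. A width-minimizing hyperpath will therefore avoid covering vertices rather than be forced to dominate them; nothing prevents a short walk from $s$ to $t$ that covers a small neighborhood union yet corresponds to no dominating set of $G$. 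The forward direction breaks as well: a dominating set of $G$ need not be connected, so in general there is no hyperpath that ``activates exactly the relays corresponding to $D$'' --- your construction at best encodes a connected-dominating/path-constrained variant, not MDS.

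The paper's construction resolves exactly this difficulty by inverting the roles. The vertices of $G$ are laid out as a forced chain $v_1\to v_2\to\cdots\to v_{n+1}$ that every $s$--$t$ hyperpath must traverse in order (so connectivity, not the width objective, forces every vertex of $G$ to be ``processed''), and each hyperedge rooted at $v_i$ has two destinations: the next chain vertex $v_{i+1}$ and one super vertex $v_j^s$ (a block of $n_s$ vertices) whose corresponding $v_j$ dominates $v_i$ in $G$. Choosing a hyperedge at $v_i$ is thus choosing a dominator of $v_i$; the union counting makes re-using the same dominator free; and the width is essentially $n_s$ times the number of distinct super vertices touched, so thinnest path $=$ MDS. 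Crucially, any (possibly disconnected) dominating set is realizable, since the path's connectivity comes from the chain, not from the dominating set. Beyond this, your treatment of the 2-D disk realizability is only a statement of intent: the paper needs concrete devices that you do not supply --- directed crosses (one-way overhearing via unequal maximum ranges) to handle unavoidable line crossings, directed overhearing with exponentially growing anchor ranges to implement up to $n$ incoming hyperedges at a super vertex despite the planar fact that at most $5$ mutually non-reaching vertices can reach a common sixth vertex, and a geometric lemma bounding the number of added connector vertices so that the reduction stays polynomial.
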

\vspace{0.5em}

To prove Theorem~\ref{thm:NP-2ddh}, consider an MDS problem in an arbitrary graph $G$. We first construct a \emph{general} directed hypergraph $H_1$ based on $G$ such that a thinnest path in $H_1$ leads to an MDS in $G$. The main challenge in the proof is to show that $H_1$ is realizable under a 2-D disk model. There are two main difficulties. First, line crossing is inevitable when we draw $H_1$ on a 2-D plane. The implementation of hyperedges that cross each other needs special care to avoid unwanted overhearing that may render the reduction invalid. Second, the geometric structure of 2-D disk hypergraphs dictates that there are at most 5 vertices (even with arbitrary ranges) that can reach a common sixth vertex but not each other. It is thus challenging to implement a vertex with up to $n$ incoming hyperedges in $H_1$ while preserving the reduction. 

Our main approach to the above difficulties is to allow \emph{directed} overhearing. Specifically, messages transmitted along one hyperedge may be heard by vertices implementing another hyperedge in $H_1$, but not vise verse. By carefully choosing the directions of the introduced overhearing, we ensure that the resulting 2-D disk hypergraph $H_2$, while having a different topological structure from $H_1$, preserves the reduction from MDS in~$G$.

Another challenge in constructing $H_2$ is to ensure the polynomial nature of the reduction. The number of additional vertices added in $H_2$ needs to be in a polynomial order with the size of $G$. This often limits using reduced transmission ranges as a way to avoid unwanted overhearing: exponentially small transmission ranges may require exponentially many vertices to connect two fixed points.     

A detailed proof is given in \ref{app:NP_2ddh}.

\vspace{\SubsectionVspace}
 \subsection{TP in $2$-D Unit Disk Hypergraphs}
We now establish the NP-completeness of TP in 2-D \emph{unit} disk hypergraphs (UDH). The proof builds upon the proof of Theorem~\ref{thm:NP-2ddh}. The only difference is that when implementing the general directed hypergraph $H_1$, we no longer have the freedom of choosing the maximum transmission range of each vertex. This presents a non-trivial challenge. As stated in Sec.~\ref{sec:npc-2ddh}, our approach to circumventing the constraints imposed by the geometrical structures of 2-D disk hypergraphs is to allow directed overhearing, which is achieved by carefully choosing different maximum transmission ranges of various vertices. To implement a 2-D UDH for the reduction, however, all vertices must have the same maximum transmission range.   

To address this issue, we introduce a special type of disk hypergraphs, called \emph{exposed disk hypergraphs}, and show that TP in $k$-D exposed disk hypergraphs can be reduced to TP in $k$-D UDH for any $k\geq 2$. We then show that the $2$-D disk hypergraph $H_2$ constructed in the proof of Theorem~\ref{thm:NP-2ddh} can be modified to an exposed hypergraph while preserving the reduction, . We thus arrive at the NP-completeness of TP in 2-D UDH based on the transitivity of polynomial time reduction.

\begin{define}
In a disk hypergraph $H=(V,E)$, let $\tau_v$ denote the closest non-neighbor \footnote{A vertex is a non-neighbor of $v$ if it is outside the maximum range $R_v$ of $v$.} of $v$. Define\footnote{The parameter $\frac{1}{2}$ can be change to an arbitrary positive value smaller than $1$.}
\begin{align*}
\epsilon_v\defeq 
\frac{1}{2}(d(v, \tau_v)-R_v),
\end{align*}
where $d(v,\tau_v)$ is the distance between $v$ and $\tau_v$ ($\epsilon_v$ is set to $1$ when $v$ does not have non-neighbors). An \emph{exposed area} $\Phi_v$ of $v$ is defined as
\begin{align*}
\Phi_v\defeq D_{v,R_v+\epsilon_v}\backslash \bigcup_{u\in V} D_{u,R_u},
\end{align*}
where $D_{v,r}$ denotes the closed ball centered at $v$ with radius $r$. A disk hypergraph is \emph{exposed} if every vertex has a non-empty exposed area.  
\end{define}
\begin{figure}[htbp]
\centering
\begin{psfrags}
\psfrag{a}[c]{$H_1$}
\psfrag{b}[c]{$H_2$}
\psfrag{c}[c]{$H_3$}
\scalefig{0.4}\epsfbox{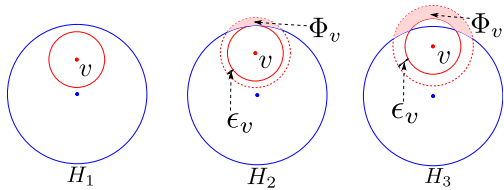}
\end{psfrags}
\caption{Exposed hypergraphs and exposed areas ($H_1$ is not exposed since $v$ has an empty exposed area; $H_2$ and $H_3$ are exposed).}
\label{fig:expose}
\end{figure}

\begin{lemma}\label{lma:expose}
TP in $k$-D exposed disk hypergraphs $\leq_P$ TP in $k$-D UDH.
\end{lemma}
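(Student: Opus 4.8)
The plan is to exhibit a polynomial-time, optimum-preserving reduction that converts a given exposed disk hypergraph $H=(V,E)$ into a UDH $H'$ living in the same $k$-dimensional space, so that an $s$--$t$ path of width $\le W$ exists in $H'$ iff one exists in $H$. First I would \emph{normalize by scaling}: rescale the whole point configuration so that $\max_{v\in V}R_v=1$. After scaling every original range satisfies $R_v\le 1$, hence every hyperedge of $H$ (a ball of radius $r\le R_v\le 1$ about its source) is also a legal hyperedge of a unit-disk hypergraph. Thus the \emph{only} extra freedom that a UDH grants over $H$ is ``over-reaching'': at a vertex $v$ with $R_v<1$ one may now transmit with any radius up to $1$, reaching vertices in the annulus $R_v<d(v,\cdot)\le 1$ that were unreachable in $H$. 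The entire task reduces to neutralizing this single extra capability.

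To neutralize it I would attach a \emph{penalty gadget} to each vertex using its exposed area. Since $H$ is exposed, every $\Phi_v$ is non-empty; because $\Phi_v$ is carved out of the closed ball $D_{v,R_v+\epsilon_v}$ by \emph{strict} exclusions $d(u,\cdot)>R_u$, it contains a point with a full-dimensional neighborhood, inside which I place $N\defeq|V|+1$ distinct ``penalty'' vertices $g_v^1,\dots,g_v^N$ (given no useful outgoing role, i.e.\ acting as sinks). These points inherit two properties straight from the definition of $\Phi_v$: (i) each lies outside every original ball $D_{u,R_u}$, so \emph{no} legitimate transmission (radius $r\le R_u$) of any vertex ever covers a penalty vertex; and (ii) each lies in $D_{v,R_v+\epsilon_v}$ with $R_v+\epsilon_v<R_v+2\epsilon_v=d(v,\tau_v)$, so the penalties of $v$ sit strictly between $v$'s range $R_v$ and its nearest non-neighbor $\tau_v$. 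Let $H'$ be the UDH on $V\cup\bigcup_v\{g_v^1,\dots,g_v^N\}$ with common range $1$.

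The correctness argument then splits cleanly. By (i), any hyperpath of $H'$ using radius $\le R_v$ at each $v$ covers exactly the original vertices it would cover in $H$ and no penalty vertex, so its width is identical in $H'$ and $H$; hence $\mathrm{OPT}(H')\le\mathrm{OPT}(H)\le|V|$. For the converse I would argue that a thinnest path of $H'$ never over-reaches, via a dichotomy on the radius $r$ used at a vertex $v$: if $R_v<r<d(v,\tau_v)$ then $r$ reaches no new original vertex (none lies in this annulus by the definition of $\tau_v$) and can only cover useless penalty sinks, so it is dominated by radius $R_v$; while if $r\ge d(v,\tau_v)$, then since $d(v,\tau_v)>R_v+\epsilon_v$ the radius $r$ covers \emph{all} $N$ penalties of $v$, forcing width $\ge N=|V|+1>\mathrm{OPT}(H')$, so such a path is not thinnest. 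Therefore an optimal path of $H'$ stays legitimate, giving $\mathrm{OPT}(H')=\mathrm{OPT}(H)$; since legitimate edges preserve all of $H$'s connectivity and the pair $(s,t)$ is retained, the two thinnest-path instances are equivalent.

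Finally, for polynomiality I would note the reduction adds $|V|\cdot N=\BigO{|V|^2}$ vertices, and a rational point of each non-empty $\Phi_v$ (together with a small cluster around it) can be located in polynomial time from the finitely many defining balls. The hard part will be the purely \emph{geometric} step: realizing penalty points in $k$-space that are simultaneously just beyond $R_v$, nearer than $d(v,\tau_v)$, and outside \emph{all} original balls, and then verifying that promoting every range to $1$ causes no damage—namely that no other vertex covers these points ``for free'' (reaching any penalty point requires that vertex to transmit past its own range, triggering its own penalty, hence never beneficial) and that the penalties cannot serve as cheap relays (reaching any of them is already prohibitively costly). Both requirements are exactly the two defining properties of the exposed area, so invoking that $H$ is exposed discharges the geometric burden.
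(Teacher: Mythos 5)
Your proposal is correct and takes essentially the same approach as the paper: equalize all ranges to a common unit value, plant a cluster of $|V|+1$ vertices inside each exposed area $\Phi_v$, and argue that any over-reaching hyperedge either gains no new original vertex or swallows an entire cluster, so a thinnest path never over-reaches. Your explicit radius dichotomy at $d(v,\tau_v)$ is a fleshed-out version of the paper's one-sentence argument; the only caution is that in a UDH the penalty vertices cannot literally be sinks (they carry unit range), so the no-cheap-relay observation in your last paragraph is a necessary part of the proof rather than an afterthought.
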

\begin{proof}
The basic idea is to place super vertices at specific locations in exposed areas to force vertices on a thinnest path to use transmission ranges smaller than the maximum value. The problem is thus transformed to the case with disk hypergraphs where vertices may have different maximum transmission ranges.
A detailed proof is given in \ref{app:expose_base}.
\end{proof}

With Lemma~\ref{lma:expose} providing a bridge between disk and unit disk hypergraphs, all we need to show is that MDS can be reduced to TP in $2$-D \emph{exposed} disk hypergraphs.
\begin{lemma}
MDS $\leq_P$ TP in $2$-D exposed disk hypergrpahs.
\label{lma:NP-2dedh}
\end{lemma}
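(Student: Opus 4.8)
The plan is to recycle the construction behind Theorem~\ref{thm:NP-2ddh} instead of starting a fresh reduction. That proof already yields a 2-D disk hypergraph $H_2$ whose thinnest path encodes a minimum dominating set of the input graph $G$. I would show that $H_2$ can be perturbed into a disk hypergraph $H_3$ that is \emph{exposed} in the sense of the definition above, while leaving the neighbor set of every vertex untouched. Because the thinnest-path-to-dominating-set correspondence of Theorem~\ref{thm:NP-2ddh} depends only on which vertices can hear which transmissions, any neighbor-preserving perturbation automatically transports the whole reduction (including the directed-overhearing bookkeeping) to $H_3$; exposedness then certifies $H_3$ as a bona fide instance of TP in 2-D exposed disk hypergraphs, establishing MDS $\leq_P$ TP in 2-D exposed disk hypergraphs.

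To see where exposedness can fail, I would sort the vertices of $H_2$ into two kinds. A generic, peripheral vertex has empty plane just beyond its disk, so $\Phi_v$ is already non-empty and no change is needed. The dangerous vertices are the heavily covered ones, above all the high-incoming-degree \emph{target} vertices that the construction must let up to $n$ other vertices reach. For such a $v$ I would manufacture an exposed point by a controlled range perturbation: either shrink the maximum ranges of the disks that block $v$, or enlarge $R_v$ itself to some $R_v'<d(v,\tau_v)$ so that the sphere of radius $R_v'$ protrudes past the blocking disks along a chosen direction, opening a thin uncovered annular sliver just outside $v$. Keeping $R_v'$ strictly below $d(v,\tau_v)$ is exactly the condition that no new vertex enters $v$'s range, so neither $v$'s neighbor set nor the hyperedges rooted at $v$ change. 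The direction to protrude into is supplied by the very device Theorem~\ref{thm:NP-2ddh} uses to beat the five-vertex limit: the incoming hyperedges at a target are realized as a directed cascade approaching from a restricted cone, so the opposite side of $v$ is always free.

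Two bookkeeping points then remain. First, reduction preservation: since the perturbations alter only maximum ranges and add no vertex to any annulus, every transmission reaches exactly the same set of nodes as before, so each hyperpath has the same cover and width in $H_3$ as in $H_2$; the thinnest-path value, and with it the minimum dominating set size, is reproduced verbatim. Second, polynomiality: at most one range adjustment is made per vertex, so $H_3$ has the same (polynomially bounded) vertex set as $H_2$ and is produced in polynomial time. Composing the resulting reduction with Lemma~\ref{lma:expose} would then immediately hand over the unit-disk consequence as well.

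I expect the main obstacle to be precisely the exposure of those high-incoming-degree targets. Two things make it delicate. The rigid geometry of 2-D disk hypergraphs --- at most five mutually non-adjacent vertices can share a common neighbor --- forces each target to be girdled tightly, so even carving out a sliver of uncovered annulus is tight; and any range adjustment threatens the invariants Theorem~\ref{thm:NP-2ddh} labored to maintain, since enlarging $R_v$ can create the very overhearing that proof suppressed, while one vertex's protruding disk can swallow the exposed sliver of another. Choosing the directions and magnitudes of the perturbations so that every $\Phi_v$ comes out non-empty \emph{simultaneously}, without violating any crossing or overhearing constraint inherited from $H_2$, is where the real work sits.
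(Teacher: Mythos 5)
Your high-level outline coincides with the paper's: keep the $H_2$ of Theorem~\ref{thm:NP-2ddh}, repair exactly the vertices whose exposed areas are empty (those at the directed crosses and in the super-vertex gadgets), and re-verify the reduction. The gap is in the repair mechanism you commit to: a perturbation that changes only maximum ranges, moves no vertex, adds no vertex, and leaves every neighbor set untouched. This provably cannot expose the lower-level vertices at a cross. Let $v$ be a red (lower-level) vertex near a crossing and $u$ a nearest blue (higher-level) vertex; by the directedness of the cross, $u$ is a non-neighbor of $v$, so $d(v,\tau_v)\leq d(v,u)$ and any admissible range satisfies $R_v<d(v,u)$. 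The region in which an exposed point must be found is $D_{v,R_v+\epsilon_v}$ with $R_v+\epsilon_v=\frac{1}{2}(R_v+d(v,\tau_v))<d(v,u)$, so every candidate point lies within $2d(v,u)$ of $u$. In $H_2$ the blue grid vertices sit at spacing exactly equal to their constant range $R$; hence a red vertex close to the crossing point either has a single blue vertex with $2d(v,u)\leq R_u$ (crossing passing near a blue vertex) or two blue vertices at distance roughly $R/2$ on either side, and in both cases the whole candidate disk stays inside the union of the blue disks, so $\Phi_v=\emptyset$ for \emph{every} admissible choice of $R_v$. Your fallback, shrinking the blocking disks, is also unavailable: the blue ranges equal the blue spacing, so any shrinkage removes the next blue vertex from the neighbor set and disconnects the chain, contradicting both correctness and your own ``neighbor sets untouched'' invariant.

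The paper's proof accepts that the geometry must change. At each cross it re-places the lower-level vertices (at $29^{\circ}$ off the axis with range $R\tan\theta\approx 0.554R$, chosen so that their disks poke past the blue disks while still missing the blue vertices) and inserts a constant number of additional vertices whose ranges grow gradually back to the grid constant, so that the downstream blue disks do not re-cover the newly created exposed slivers (Fig.~\ref{fig:expose_cross}); near each super vertex it relocates the interface vertices by a constant distance and adds a constant number of connecting vertices (Fig.~\ref{fig:expose_super}). Repositioning and added vertices are thus essential, and preservation of the reduction must be re-argued for the modified topology rather than inherited for free from unchanged neighbor sets. That redesign is precisely the content your proposal leaves out.
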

\begin{proof}
See \ref{app:expose}.
\end{proof}

Based on Lemma~\ref{lma:expose} and Lemma~\ref{lma:NP-2dedh}, we arrive at the following theorem. 
\begin{theorem}
MDS $\leq_P$ TP in $2$-D UDH.
\label{thm:NP_2dudh}
\end{theorem}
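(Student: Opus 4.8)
The plan is to establish the theorem by composing the two reductions already supplied by Lemma~\ref{lma:expose} and Lemma~\ref{lma:NP-2dedh}, exploiting the transitivity of polynomial-time reduction. Since $\leq_P$ is transitive---the composition of two polynomial-time many-one reductions is again a polynomial-time reduction---it suffices to pass through an intermediate problem. The natural choice is TP in $2$-D \emph{exposed} disk hypergraphs, precisely the class isolated by the exposed-hypergraph definition above, which was introduced exactly to serve as this bridge between disk and unit disk models.

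Concretely, I would proceed in two steps. First, invoke Lemma~\ref{lma:NP-2dedh} to reduce an arbitrary MDS instance on a graph $G$ to an instance of TP in a $2$-D exposed disk hypergraph; call the constructed hypergraph $H$, and note that the lemma guarantees every vertex of $H$ has a non-empty exposed area, so $H$ is a \emph{legitimate} input to the next reduction. Second, instantiate Lemma~\ref{lma:expose} at $k=2$ to convert this exposed-disk-hypergraph instance into an instance of TP in $2$-D UDH, using the super-vertex construction that forces relays to adopt sub-maximal transmission ranges and thereby emulates distinct maximum ranges within a uniform-range model. Chaining these two maps gives a single reduction from MDS directly to TP in $2$-D UDH.

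It then remains to verify the two standard properties of a valid reduction for the composed map. For the running time, I would observe that the first reduction outputs an instance of size polynomial in $|G|$, and the second runs in time polynomial in the size of its (already polynomial) input, so the overall construction is polynomial in $|G|$; in particular the number of introduced super vertices stays polynomial, consistent with the earlier warning against exponentially small ranges. For correctness, I would chain the two equivalences: an optimal dominating set of $G$ corresponds to a thinnest path of a prescribed width in $H$ (Lemma~\ref{lma:NP-2dedh}), and a thinnest path in $H$ corresponds to a thinnest path of matching width in the derived UDH (Lemma~\ref{lma:expose}), so the optimal widths align and a yes-instance maps to a yes-instance in both directions.

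I do not expect a genuine obstacle at the level of this theorem, since both hard constructions have already been discharged inside the lemmas. The only point that requires attention is the compatibility check that the output of the first reduction is a valid exposed hypergraph so that the second reduction applies verbatim---and this is exactly guaranteed by the ``exposed'' clause in the statement of Lemma~\ref{lma:NP-2dedh}. Combined with the observation that TP is evidently in NP (a candidate hyperpath can be verified and its width computed in polynomial time), this chain of reductions yields the NP-completeness of TP in $2$-D UDH.
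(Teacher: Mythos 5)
Your proposal is correct and matches the paper's own argument: the theorem is obtained precisely by chaining Lemma~\ref{lma:NP-2dedh} (MDS $\leq_P$ TP in $2$-D exposed disk hypergraphs) with Lemma~\ref{lma:expose} instantiated at $k=2$, invoking the transitivity of polynomial-time reduction. Your additional checks (polynomial size of the composed map, compatibility of the intermediate exposed-hypergraph instance) are exactly the implicit verifications the paper relies on.
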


\vspace{\SubsectionVspace}
\subsection{TP in $2$-D Disk Graphs and $3$-D Unit Disk Graphs}
In this subsection, we consider the thinnest path problem in disk graphs and unit disk graphs (UDG). Recall that disk and unit disk graphs are special ring hypergraphs with $r_i=R_i$ and $r_i=R_i=1$, respectively. In other words, they can be seen as hypergraphs where each vertex has only one outgoing hyperedge directing to its prefixed neighbor set (determined by its fixed transmission power). This also shows that disk hypergraphs and disk graphs are not special cases of each other. Given the same set of vertices and their associated maximum ranges, a disk hypergraph has a different topology from a disk graph: each vertex in general has more than one outgoing hyperedges due to the freedom of using smaller transmission ranges. The same holds for UDH and UDG. As a consequence, the complexity of TP in disk and unit disk graphs cannot be inferred from Theorems \ref{thm:NP-2ddh} and \ref{thm:NP_2dudh}, and needs to be studied separately.

\begin{theorem}
MDS $\leq_P$ TP in $2$-D disk graphs.
\label{thm:NP_2ddg}
\end{theorem}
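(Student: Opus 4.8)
The plan is to reduce MDS to TP in 2-D disk graphs by reusing the 2-D disk hypergraph $H_2$ built in the proof of Theorem~\ref{thm:NP-2ddh} and converting it into a disk graph that preserves the reduction. Recall that in $H_2$ the dominating set is encoded by the \emph{choice of transmission power}: a relay that ``selects'' a vertex $u_i$ of $G$ transmits at a large range covering the flag vertices associated with the closed neighborhood $N_G[u_i]$, while a relay that ``skips'' $u_i$ transmits at a small range that only reaches the next relay. In a disk graph every vertex has a single, fixed range, so this per-transmission freedom is lost. The first step is therefore to remove every range choice in $H_2$ while keeping the family of realizable path covers unchanged.

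Concretely, I would first identify, for each vertex $v$ of $H_2$ that exercises a meaningful range choice along a thinnest path, the finite list of transmission radii $r_1<\cdots<r_k$ that actually matter (those reaching a next relay or a flag set; here $k$ is bounded by a polynomial in $|G|$). I then split $v$ into co-located fixed-range copies $v^{(1)},\dots,v^{(k)}$, giving $v^{(j)}$ the fixed range $r_j$, and re-wire the incoming and outgoing edges of $v$ so that the path may enter any copy and leave from it toward the intended successor. Vertices of $H_2$ that already transmit at a unique range are kept as single fixed-range vertices. Because fixed but unequal ranges can still realize directed overhearing---$b$ hears $a$ while $a$ does not hear $b$ whenever $R_a\ge d(a,b)>R_b$---the directed-overhearing and line-crossing gadgets that make $H_2$ realizable in the plane carry over to the split copies. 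The output is a 2-D disk graph $H_2'$.

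Correctness rests on a correspondence between path covers in $H_2'$ and $H_2$: routing an $H_2'$-path through the copy $v^{(j)}$ covers exactly the vertices that $v$ covers when transmitting at radius $r_j$ in $H_2$, except for the co-located sibling copies of $v$, which are covered whenever any copy of $v$ transmits. Since this sibling coverage is identical for every choice of $j$, it contributes a fixed additive offset to the width and does not affect which path is thinnest. Hence a thinnest path in $H_2'$ selects, at each split vertex, the same radius as a thinnest path in $H_2$, and by Theorem~\ref{thm:NP-2ddh} it yields a minimum dominating set in $G$, with width and dominating-set size differing by the same constant as in Theorem~\ref{thm:NP-2ddh} plus the offset. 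Since each vertex spawns at most $k=\mathrm{poly}(|G|)$ copies and the coordinates and ranges are inherited from $H_2$ (computable in polynomial time), $H_2'$ has polynomial size and the reduction is polynomial.

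The main obstacle I expect is to guarantee that the added routing freedom among the fixed-range copies does not create covers unavailable in $H_2$ that would break the reduction---in particular, that no combination of copy choices lets the path reach $t$ while covering strictly fewer flag vertices than any $H_2$ thinnest path, and that co-location introduces no unintended coverage of flags or shortcut edges. This is the fixed-range analogue of the realizability difficulty already resolved for Theorem~\ref{thm:NP-2ddh} (controlling unwanted overhearing and crossings), now aggravated by the loss of the shrink-the-range escape: one can no longer suppress overhearing by locally lowering a transmitter's power. The resolution is to keep each copy's fixed range tuned exactly to its radius $r_j$ and to keep the sibling copies within a distance small enough to be swept into the constant offset, so that the realizable cover family of $H_2'$ coincides with that of $H_2$ and the reduction is preserved.
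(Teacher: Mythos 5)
Your starting point---reuse $H_2$ from Theorem~\ref{thm:NP-2ddh} and eliminate range choice---is the right one, but your premise about how $H_2$ encodes the dominating set is wrong, and this leads you to build machinery that is both unnecessary and not fully sound. In a disk hypergraph the hyperedges rooted at a vertex are \emph{nested} (each is determined by a radius, so larger radius covers a superset), so a relay cannot use transmission power to select among incomparable ``flag'' sets corresponding to different dominating nodes; that is geometrically impossible, and it is precisely why the paper's construction of $H_2$ encodes the choice of dominating node by \emph{routing}: each hyperedge of $H_1$ rooted at $v_i$ is realized as a distinct physical chain of relay vertices through the grid down to the corresponding super vertex. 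Moreover, in that construction consecutive relay vertices are deliberately placed at distance \emph{exactly equal to} the transmitter's maximum range (this holds on the grid segments, in the cross gadgets, and around the super vertices). Consequently every vertex on any path in $H_2$---in particular on the thinnest path---already transmits at its maximum range. The paper's proof is therefore a one-liner: freeze every vertex at its maximum range, i.e., keep only the max-range hyperedge at each vertex. This is by definition a $2$-D disk graph, it admits exactly the same relevant paths with exactly the same covers, and Lemma~\ref{lma:dh} carries over verbatim.

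Beyond missing this observation, your splitting construction has concrete gaps even taken on its own terms. First, ``re-wiring the incoming and outgoing edges'' is not an available operation in a geometric disk graph: adjacency is forced by positions and ranges, so you must argue that the automatic adjacencies of co-located copies create no shortcuts, not stipulate the wiring. Second, the ``fixed additive offset'' argument only covers the choice of copy at a single visited cluster; it ignores that a transmission which merely \emph{overhears} a split cluster (without the path routing through it) now covers $k$ co-located copies instead of one vertex, and different candidate routes overhear different clusters, so widths shift by different amounts across routes. Repairing this requires re-inflating the super-vertex size $n_s$ so that it dominates the total number of normal vertices \emph{after} splitting (the analogue of the choice $n_s=n_2+1$ in Lemma~\ref{lma:dh}), a step your proposal never takes. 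Applied to the actual $H_2$, your construction would in any case degenerate to $k=1$ at every vertex---that degenerate case \emph{is} the paper's proof.
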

\begin{proof}
In the proof of Theorem~\ref{thm:NP-2ddh}, the vertices along the thinnest path in the constructed 2-D disk hypergraph $H_2$ all use their maximum ranges. Thus, MDS in $G$ can be reduced to TP in a disk graph constructed based on $H_2$ by including only those hyperedges associated with the maximum range of each vertex. 
\end{proof}

Next we consider TP in UDG. Unfortunately, the approach through exposed disk hypergraphs used in showing the NP-completeness of TP in UDH does not apply, since it hinges on vertices being able to use any transmission ranges smaller than a maximum value.  The difficulty, however, can be circumvented for 3-D UDG as shown in the following theorem.

\begin{theorem}
MDS in degree-$3$ graphs $\leq_P$ TP in $3$-D UDG.
\label{thm:NP_3dudg}
\end{theorem}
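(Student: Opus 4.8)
The plan is to reduce from MDS restricted to degree-$3$ graphs, which remains NP-complete, by first invoking the construction of Theorem~\ref{thm:NP-2ddh} to obtain a reduction gadget and then lifting it into a $3$-D unit disk graph. The key observation enabling the jump from $2$-D disk graphs (Theorem~\ref{thm:NP_2ddg}) to $3$-D UDG is that the extra spatial dimension provides the geometric freedom that was unavailable in the planar unit-range setting. Recall that the obstruction to a UDG reduction in $2$-D was that fixed unit ranges cannot simultaneously realize the directed overhearing pattern of $H_2$: vertices needed different ranges to break symmetry. In $3$-D, however, I can exploit the third coordinate to separate vertices that must \emph{not} hear each other while keeping unit ranges between those that must, effectively simulating heterogeneous ranges through heterogeneous $z$-displacements.

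\textbf{Step 1: Restrict the source problem.} First I would note that MDS in degree-$3$ (equivalently, bounded-degree) graphs is NP-complete, a standard strengthening of the classical MDS result in~\cite{Garey&Johnson1979}. The degree bound is essential here: it caps the number of incoming and outgoing connections at each gadget vertex, which limits how many unit spheres must meet or avoid a common point. This is precisely what makes the unit-range constraint tractable, since the $2$-D geometric obstruction cited in Sec.~\ref{sec:npc-2ddh}---that at most $5$ vertices can reach a common sixth without reaching each other---no longer binds once degrees are bounded by $3$ and an extra dimension is available.

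\textbf{Step 2: Build the graph gadget.} Following the reduction in Theorem~\ref{thm:NP-2ddh}, I would construct a hypergraph $H_1$ from $G$ whose thinnest path encodes a minimum dominating set, using the same vertex-selection-and-domination gadgetry. Since we now target a disk \emph{graph} (each vertex has a single fixed neighbor set), I would fix each vertex's transmission radius to its maximum, as in the proof of Theorem~\ref{thm:NP_2ddg}, so that the path vertices all transmit at full range. The degree-$3$ bound on $G$ ensures each gadget component involves only a constant number of incident connections, so every vertex's neighbor set has bounded cardinality and the crossing/overhearing conflicts that arise are local and finite in number.

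\textbf{Step 3: Embed in $3$-D with unit ranges.} This is where I expect the main obstacle to lie. The task is to assign each gadget vertex a point in $\mathbb{R}^3$ such that (i) every intended edge $(u,w)$ satisfies $d(u,w)\le 1$, (ii) every forbidden pair satisfies $d>1$, and (iii) no unwanted overhearing corrupts the path widths. The strategy is to lay out the planar structure of $H_2$ in the $z=0$ plane and then perturb conflicting vertices in the $z$-direction: two vertices that in $2$-D would be forced within unit distance despite needing to be non-neighbors can be pushed apart by a small vertical offset that increases their Euclidean separation past $1$ while leaving their intended neighbors (also offset consistently) within range. Making this simultaneously consistent across all local gadgets---so that each vertical adjustment resolves one conflict without creating another---is the delicate combinatorial-geometric part; the bounded degree is what guarantees that only finitely many constraints touch each vertex, so a consistent assignment of bounded-precision coordinates exists. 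I would verify that the total number of added vertices (for crossings and connectors) remains polynomial in $|G|$, as flagged in Sec.~\ref{sec:npc-2ddh}, to keep the reduction polynomial-time. Since TP is clearly in NP, establishing this polynomial reduction completes the proof that TP in $3$-D UDG is NP-complete.
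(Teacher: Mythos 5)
Your overall skeleton matches the paper's: reduce from MDS in degree-$3$ graphs, reuse the $H_1$ construction of Theorem~\ref{thm:NP-2ddh}, use the third dimension to handle crossings, and keep the vertex count polynomial. However, there are two genuine gaps. The first and most serious is that you never address what happens to the \emph{super vertices} in a unit disk \emph{graph}. In $H_1$, $H_2$, and the hypergraph-based reductions, super vertices are implemented as clusters of nodes with zero transmission range, so they can never relay; this is what forces every path to march through the normal vertices $v_1,\ldots,v_{n+1}$ in order. In a UDG there is no such thing as a zero-range vertex: every node, including every node inside a super vertex, has range exactly $1$ and a full outgoing neighbor set. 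A super vertex can therefore act as a relay, and paths can shortcut through super vertices (e.g., enter along one incoming hyperedge's gadget and exit along another's), destroying the correspondence between path width and dominating sets. The paper's proof closes this hole with a dedicated gadget: a \emph{mega} vertex placed next to each super vertex, within range of only that super vertex, containing more normal vertices than the entire rest of the construction, so that any path in which a super vertex transmits covers a mega vertex and is automatically wider than any intended path. Without this (or an equivalent) idea your reduction is not correct.

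The second gap is in your Step 3 embedding strategy. You propose to lay out $H_2$ in the $z=0$ plane and resolve conflicts by small vertical perturbations. But the geometry of $H_2$ is built on heterogeneous ranges, including exponentially small ones (ranges of order $4^{-n}n$ near the super vertices and constant range $\tfrac{1}{5}$ on the grid); with unit ranges, vertices at those planar distances are all mutually adjacent, and a $z$-offset of $\delta$ only increases a planar distance $d$ to $\sqrt{d^2+\delta^2}$, so separating two vertices at planar distance $\ll 1$ past distance $1$ requires $\delta\approx 1$, which is not a perturbation and would sever their intended adjacencies. The paper avoids this entirely: it discards the $H_2$ geometry (directed crosses, exponential ranges) and re-implements the \emph{grid representation of $H_1$} at a constant grid spacing strictly greater than $1$ (e.g., $5$), realizing each line segment as a chain of tangent unit disks, and routing each crossing \emph{around} in the third dimension so that the two crossing lines have no overhearing at all (rather than directed overhearing). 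The degree-$3$ bound then guarantees at most $4$ incoming hyperedges per super vertex, which, together with the mega vertex, can be packed in $3$-D without unwanted adjacencies. You correctly identified the role of the degree bound, but the embedding must be rebuilt at unit scale, not perturbed from $H_2$.
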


The proof is similar to that of Theorem~\ref{thm:NP-2ddh} with two main differences. First, line crosses are implemented by using the third dimension to ``go around'', rather than using different transmission ranges (a luxury absent in UDG) to create directed crosses. Second, reduction from MDS in graphs with a maximum degree of $3$ ensures that there are at most $4$ incoming edges to each super vertex in the reduced UDG. This makes the geometric constraint on the number (at most $11$ in a $3$-D Euclidean space) of vertices that can reach a common vertex but not each other inconsequential\footnote{We can consider a reduction from MDS in graphs with a maximum degree up to $9$ (see \ref{app:NP_3dudg}).}. A detailed proof is given in \ref{app:NP_3dudg}.

Note that using a reduction from MDS in graphs with a constant maximum degree rather than MDS in general graphs leads to a weaker statement. While MDS in both cases are NP-complete, the former is approximable with a constant ratio, and the latter a ratio of $\BigO{\log n}$. Theorems~\ref{thm:NP-2ddh}-\ref{thm:NP_2ddg} thus give a $\log n$ order lower bound on the approximation ratio of those problems whereas Theorem~\ref{thm:NP_3dudg} a constant lower bound. 
%\vspace{\SubsectionVspace}
%\subsection{Hypergraphs with Monotone Edge Set}
%So far we are considering the problem under disk propagation models. A general propagation model can also be modeled by directed hypergraphs. However, since increasing the transmission power always expands the receiving node set, the hyperedges in the resulting hypergraphs satisfy certain monotonic properties. The NP-completeness results in the disk models implies the NP-completeness of the problem under general propagation models. Furthermore, it can be shown that TP in hypergraphs with monotone edge set is strongly inapproximable.

\vspace{\SectionVspace}
\section{Polynomial Complexity Problems}\label{sec:poly}
%\subsection{TP in $1-$dimensional disk hypergraph}
In this section, we consider the thinnest path problem in 1-D networks. We show that the problem is polynomial time by constructing an algorithm
with time complexity of $\BigO{n}$. Since the input data has size $\BigO{n}$, the proposed algorithm is order-optimal. We then consider
the $1.5$-D problem and show that the algorithm developed for 1-D networks directly applies to the $1.5$-D problem.
\vspace{\SubsectionVspace}
\subsection{1-D Networks}
Consider a network under a general propagation model with $n$ nodes located on a straight line. Each vertex $v_i$ is associated with a coordinate $x_i$ on the line (the vertex index $v_i$ and its location $x_i$ are often used interchangeably). Without loss of generality, we assume that $x_1\leq x_2\leq\ldots\leq x_n$. 

\begin{figure}[htbp]
\centering
\psfrag{s}[c]{$~~s(v_4)$}
\psfrag{t}[c]{$t(v_9)$}
\psfrag{0}[c]{$v_8$}
\psfrag{1}[c]{$v_7$}
\psfrag{2}[c]{$v_6$}
\psfrag{3}[c]{$v_5$}
\psfrag{4}[c]{$v_3~$}
\psfrag{5}[c]{$v_2$}
\psfrag{6}[c]{$v_1$}
\scalefig{0.46}\epsfbox{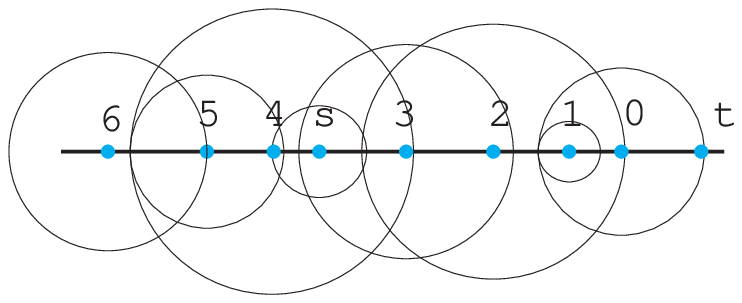}
\caption{A 1-D network (circles represent maximum ranges under a disk propagation model).}
\label{fig:1D}
\end{figure}

%\subsubsection{The Algorithm based on Nested Backward Induction (NBI)}

It is clear that every node located between the source $s$ and the destination $t$ (see Fig~\ref{fig:1D}) will hear the message
no matter which path is chosen and all nodes to the right of $t$ can be excluded from the thinnest path. Therefore, finding
the thinnest path is to minimize the number of vertices to the left of $s$ that can overhear the message. The problem is nontrivial. Due to the arbitrariness of the node locations and propagation range, a forward path (\ie every hop moves the message to the right toward $t$) from $s$ to $t$ may not exist and nodes to the
left of $s$ may need to act as relays. The question is thus how to efficiently find out whether a forward path exists and if not,
which set of nodes to the left of $s$ need to relay the message.

We propose an algorithm based on nested backward induction (NBI). For each vertex $v$,
we define its predecessor $\rho_v$ to be the rightmost vertex on the left side of $v$ that can reach $v$:
\begin{align}
\rho_v=&\arg\max_{u\in V}\{x_u: x_u<x_v, \nonumber\\
&\exists e \in E \textrm{ s.t. } s_e=u \textrm{ and } v\in T_e\}.
\end{align}
Thus, in order to reach $v$, its predecessor $\rho_v$ or a vertex to the left of $\rho_v$ has to transmit.
In other words, those vertices between $\rho_v$ and $v$ cannot directly reach $v$. Equivalently, any vertex to the right of $v$ can only hear a message from $s$ through a relay by $\rho_v$ or a vertex to the left of $\rho_v$. 

The NBI algorithm is then carried out in two steps.
In the first step, the predecessors of certain vertices are obtained one by one starting from $t$ moving toward $s$. Specifically, the predecessor of $t$, denoted by $u_1=\rho_t$, is first obtained. If $x_{u_1}\leq x_{s}$, then the first step terminates. Otherwise, the predecessor of $u_1$, denoted by $u_2=\rho_{u_1}$, is obtained and its location compared with $x_s$. The same procedure continues until the currently obtained predecessor is to the left of $s$ or is $s$ itself. The first step thus produces a sequence of vertices $u_1, u_2, \ldots, u_l$ with $u_1=\rho_t, u_2=\rho_{u_1}$, $\ldots$, $u_l=\rho_{u_{l-1}}$ and $x_{u_l}\leq x_{s}$.
Let $L_1=\{u_l,u_{l-1},\ldots,u_1,t\}$, which is a valid path from $u_l$ to $t$.
If $u_l=s$, the algorithm terminates, and the thinnest path from $s$ to $t$ is given by $L_1$.
Otherwise, we carry out Step~2 of the algorithm where we find a path from $s$ to $u_l$.
 Specifically, let $V'$ denote the set of vertices located between $u_l$ and $u_{l-1}$ including $u_l$ but not $u_{l-1}$. Let $E'$ denote the set of all hyperedges whose source and destination vertices are in $V'$. As shown in~\ref{app:F} on the correctness of the algorithm, any hyperpath $L_2$ from $s$ to $u_l$ in the sub-hypergraph $H'=(V',E')$ concatenated with $L_1$ gives a thinnest path from $s$ to $t$. Finding such an $L_2$ can be easily done by a breadth-first search (BFS) in $H'$. However, the resulting time complexity is $\BigO{n^2}$. Next we propose a special BFS procedure to reduce the time complexity to $\BigO{n}$. The trick here is to set up two pointers, $k_l$ and $k_r$, to the locations of the leftmost and the rightmost vertices in $V'$ that have been discovered. Due to the geometric structure of the 1-D network, each time we only need to search vertices to the left of $k_l$ and vertices to the right of $k_r$. The detailed algorithm is given below.

\begin{itemize}
\item[1.] Enqueue $s$, set $k_l$ and $k_r$ to the index of $s$.
\item[2.] Repeat until the queue is empty or $u_l$ is found:
\begin{itemize}
	\item Dequeue a vertex $v$ and examine it
	\item If $v=u_l$, go to step 4.
	\item Otherwise,
	\begin{itemize}
		\item While $v$ can reach $v_{k_r+1}$
		\begin{itemize}
			\item Enqueue $v_{k_r+1}$ and $k_r\leftarrow k_r+1$
			\item Set the parent of $v_{k_r+1}$ to $v$
		\end{itemize}
		\item While $v$ can reach $v_{k_l-1}$
		\begin{itemize}
			\item Enqueue $v_{k_l-1}$ and $k_l\leftarrow k_l-1$
			\item Set the parent of $v_{k_l-1}$ to $v$
		\end{itemize}	
	\end{itemize}
\end{itemize}
\item[3.] If the Queue is empty, return ``no path from $s$ to $t$''.
\item[4.] Trace back to $s$ and return $L_2$.
\end{itemize}

The following theorem establishes the correctness of the proposed NBI algorithm. Furthermore, it reveals a strong property
of the path obtained by NBI under a disk propagation model. Specifically, under a disk propagation model, we define the \emph{covered area} $A(L)$ of a hyperpath $L=\{e_1,\ldots,e_m\}$ as
\begin{equation}
  A(L)\defeq \bigcup_{i=1}^{m} D_{s_{e_i},r_{e_i}},
\label{eq:AL}
\end{equation}
where $r_{e_i}$ is the minimum transmission range that induces hyperedge $e_i$, \ie
\begin{align}
  r_{e_i}=\max_{v\in T_{e_i}}\{d(s_{e_i},v)\}.
\end{align}
Theorem~\ref{thm:NBI_property} shows that the covered area of the path obtained by NBI is a subset of the covered area of any
feasible path from $s$ to $t$.

\begin{theorem}
NBI algorithm finds the thinnest path $L^*$. Furthermore, under a disk propagation model, given any valid path $L$ from $s$ to $t$, we have $A_{L^*}\subseteq A_{L}$.
\label{thm:NBI_property}
\end{theorem}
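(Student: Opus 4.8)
The plan is to prove the two assertions in turn. First I would verify that the path $L^*$ produced by NBI is a well-defined, valid $s$-$t$ hyperpath; then---and this is the substance---I would show $A_{L^*}\subseteq A_L$ for every valid path $L$, from which the thinnest-path optimality follows under the disk model, since $W(\cdot)$ counts exactly the vertices lying in the covered region, so a containment of covered regions forces $W(L^*)\le W(L)$. Validity of $L_1$ is immediate: each $u_{i+1}=\rho_{u_i}$ can reach $u_i$ and $u_1=\rho_t$ can reach $t$. For Step~2 I would argue that an $s$-$t$ path exists if and only if an $s$-$u_l$ path exists in $H'=(V',E')$: by the predecessor property, to reach any vertex at position $\ge x_{u_{l-1}}$ the message must be relayed from $u_l$ or a vertex further left, so the initial portion of any $s$-$t$ path up to its first relay at position $\le x_{u_l}$ lives in $V'$ and, concatenated with $L_1$, reconstitutes a valid path. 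Correctness and linearity of the two-pointer BFS then rest on the 1-D fact that the set a vertex covers is a contiguous block containing it, so the discovered set is always an interval $[v_{k_l},v_{k_r}]$ whose only candidate new neighbors are $v_{k_l-1}$ and $v_{k_r+1}$.

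For the area containment, two preliminary observations reduce the claim to a one-sided statement. First, for any valid $L$ the consecutive hyperedges share the relay $s_{e_{i+1}}\in T_{e_i}$, so consecutive disks $D_{s_{e_i},r_{e_i}}$ overlap; hence $A_L$ is a \emph{single} interval $[a_L,b_L]$ with $[x_s,x_t]\subseteq[a_L,b_L]$, giving $a_L\le x_s$ and $b_L\ge x_t$. Second, with minimal-range transmissions $L^*$ never covers a point to the right of $t$ (its rightmost reach is the final hop $u_1\to t$, which lands exactly at $x_t$), so writing $A_{L^*}=[a^\ast,b^\ast]$ we have $b^\ast=x_t$. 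It therefore suffices to prove $a_L\le a^\ast$ for every valid $L$, i.e.\ that $L^*$ reaches least far to the left.

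The heart of the argument is a \textbf{crossing lemma}. Fix a milestone $u_i$ (with $\rho_{u_i}=u_{i+1}$) and let $e_j$ be the first hop of $L$ whose disk covers a point $\ge x_{u_i}$; such a hop exists because the last hop covers $t$. Every earlier hop covers only points $<x_{u_i}$, so the source $s_{e_j}$ (either $s$ or a vertex of $T_{e_{j-1}}$) satisfies $x_{s_{e_j}}<x_{u_i}$. No vertex in $(x_{u_{i+1}},x_{u_i})$ can cover a point $\ge x_{u_i}$: in the disk model the covered set is an interval, so covering such a point would force covering $u_i$ itself, contradicting that $u_{i+1}=\rho_{u_i}$ is the rightmost left-of-$u_i$ vertex reaching $u_i$. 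Hence $x_{s_{e_j}}\le x_{u_{i+1}}$, its range is at least $x_{u_i}-x_{s_{e_j}}$, and its disk extends left to at most
\[
2x_{s_{e_j}}-x_{u_i}\le 2x_{u_{i+1}}-x_{u_i},
\]
so $a_L\le 2x_{u_{i+1}}-x_{u_i}$; the same argument applied to $t$ (with $\rho_t=u_1$) gives $a_L\le 2x_{u_1}-x_t$. Taking the minimum over all backbone hops yields $a_L\le a^\ast$, because $a^\ast$ equals the smallest backbone left-extent: the within-$V'$ path $L_2$ cannot reach left of $2x_{u_l}-x_{u_{l-1}}$, which is itself one of those extents, so it does not lower $a^\ast$. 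Combined with $b^\ast=x_t\le b_L$ this gives $A_{L^*}=[a^\ast,x_t]\subseteq[a_L,b_L]=A_L$ and hence $W(L^*)\le W(L)$.

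For the general (non-disk) 1-D model the disks are replaced by contiguous reachable vertex sets, and the same crossing argument shows every valid path must cross each milestone through a relay at position $\le x_{u_{i+1}}$; together with the fact that $H'$ confines the remaining freedom to $V'$, this gives the thinnest-path claim by an analogous accounting of the vertices covered outside $[x_s,x_t]$. \textbf{The main obstacle} I anticipate is precisely this crossing lemma together with the gluing of the per-milestone bounds to the $L_2$ sub-problem: one must show that the coverage to the left of $s$ is governed solely by the last backbone hop and the within-$V'$ path, and that the specific path returned by the BFS never covers a vertex that some competing valid path avoids. This discrete/continuous bookkeeping of the left frontier is the delicate part; establishing $b^\ast=x_t$ and the validity claims is comparatively routine.
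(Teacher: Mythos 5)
Your ``crossing lemma'' isolates the right structural fact---for each milestone $u_i$, any feasible path $L$ must contain a hop whose source lies at or to the left of $u_{i+1}=\rho_{u_i}$ and whose transmission reaches $u_i$---and this is exactly the observation on which the paper's proof rests. The genuine gap is in how you exploit it: you reduce everything to endpoint bookkeeping on the line, writing $A_L=[a_L,b_L]$ and $A_{L^*}=[a^*,x_t]$ and deducing $A_{L^*}\subseteq A_L$ from $a_L\le a^*$ and $x_t\le b_L$. But the covered area in \eqref{eq:AL} is a union of $d$-dimensional closed balls, and the theorem must be read that way: the entire point of the second assertion is the 1.5-D application, where eavesdroppers sit \emph{off} the line and overhear the message iff they lie in some hop's ball. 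Containment of the one-dimensional shadows does not imply containment of the ball unions: a single ball of radius $5$ centered at the origin has shadow $[-5,5]$, which is contained in the shadow of a chain of radius-$1$ balls centered at $-4.5,-3,\ldots,4.5$, yet the point at height $5$ directly above the origin lies only in the large ball. So, as written, your argument proves a strictly weaker statement than the theorem needs. The paper instead proves a per-hop ball containment: if $w$ is the source of the hop of $L$ through which $u_{k-1}$ first hears the message, then $x_w\le x_{u_k}$ and that hop's radius is at least $d(w,u_{k-1})$, whence $D_{u_k,d(u_k,u_{k-1})}\subseteq D_{w,d(w,u_{k-1})}$ by collinearity and the triangle inequality (the paper's Lemma~\ref{lma:cir}); combined with the observation (which you do make, in shadow form) that all $L_2$ and within-$V'$ disks are absorbed into $D_{u_l,d(u_l,u_{l-1})}$, this yields $A_{L^*}=\bigcup_k D_{u_k,d(u_k,u_{k-1})}\subseteq A_L$ in any dimension. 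Your proof is repaired by exactly this upgrade---your left-extent inequality is precisely the 1-D shadow of that ball containment---but without it the step fails.

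A second, smaller gap: the first assertion (NBI returns a thinnest path) is claimed under the \emph{general} 1-D propagation model, whereas you obtain it only as a corollary of the disk-model area containment and defer the general case to ``an analogous accounting,'' which is not an argument. The paper proves this part directly and without disks: the cover of the path returned by NBI is the set of vertices lying between $u_l$ and $t$, and every feasible $s$--$t$ path must cover all of these vertices (to reach each $u_{k-1}$, some vertex at or to the left of $u_k$ must transmit, and such a transmission covers everything in between); hence no path is thinner. You should either supply such an argument or explicitly restrict the optimality claim to the disk model.
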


\begin{proof}
	See \ref{app:F}.
\end{proof}

\begin{theorem}
The time complexity of the NBI algorithm is $\BigO{n}$.
\label{thm:NBI_complexity}
\end{theorem}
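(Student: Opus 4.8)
The plan is to bound the running time of the two stages of the algorithm separately and show each is $\BigO{n}$. I first note that, since the input has size $\BigO{n}$, the reach of each vertex is described by $\BigO{1}$ numbers: concretely, the set of vertices a node $u$ can reach at maximum power forms a contiguous index interval $[l_u,r_u]$ containing $u$ (this is the one-dimensional consequence of hyperedge monotonicity, and is what makes ``those vertices between $\rho_v$ and $v$ cannot directly reach $v$'' hold). Hence the predicate ``$u$ can reach $w$'' reduces to the comparison $l_u\le \mathrm{index}(w)\le r_u$ and is evaluated in $\BigO{1}$ time. I would take the correctness of the algorithm (Theorem~\ref{thm:NBI_property}) as given and focus purely on an operation count.

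For Step~1 (the nested predecessor computation), the key observation is that the produced vertices satisfy $x_{u_1}>x_{u_2}>\cdots>x_{u_l}$, i.e., the search marches monotonically leftward. To obtain $u_{i+1}=\rho_{u_i}$ I would scan vertices by decreasing index starting just left of $u_i$, testing each for reachability of $u_i$ and stopping at the first (hence rightmost) vertex that reaches $u_i$. By definition of the predecessor, every vertex strictly between $u_{i+1}$ and $u_i$ fails this test, so the indices scanned while computing $u_{i+1}$ lie in the half-open index interval $[\mathrm{index}(u_{i+1}),\mathrm{index}(u_i))$. These intervals are pairwise disjoint over $i$, so the total number of reachability tests in Step~1 is at most $n$; since each test is $\BigO{1}$, Step~1 runs in $\BigO{n}$ time.

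For Step~2 (the special BFS on $H'=(V',E')$), I would argue by amortization using the two frontier pointers $k_l$ and $k_r$. Both pointers are initialized at the index of $s$, and thereafter $k_r$ only increases while $k_l$ only decreases; since the discovered set is always the contiguous block $[k_l,k_r]\subseteq V'$, each vertex of $V'$ is enqueued and dequeued at most once, giving $\BigO{|V'|}$ queue operations. The work done while processing a dequeued vertex $v$ splits into productive while-iterations, each of which advances a pointer and enqueues a new vertex, and at most two unproductive tests (the failing checks of ``$v$ can reach $v_{k_r+1}$'' and ``$v$ can reach $v_{k_l-1}$''). The productive iterations are charged to pointer movement, whose total is bounded by $|V'|$ because the pointers never reverse; the unproductive tests contribute $\BigO{1}$ per dequeue, hence $\BigO{|V'|}$ overall. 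Therefore Step~2 also runs in $\BigO{|V'|}\le\BigO{n}$ time, and summing the two stages yields the claimed $\BigO{n}$ bound.

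The main obstacle, and the crux of the argument, is the amortized analysis of Step~2: a textbook BFS would inspect every hyperedge incident to each dequeued vertex and could incur $\BigO{n^2}$ work in a dense 1-D hypergraph. The savings rely on the one-dimensional geometry, which guarantees that both the reachable set of any vertex and the set of already-discovered vertices are contiguous index intervals; this is exactly what makes it sufficient to probe only the two frontier vertices $v_{k_r+1}$ and $v_{k_l-1}$, and what lets the monotone pointer movement absorb the cost. I would therefore spend most of the proof justifying that no relevant reachable vertex can lie strictly inside $[k_l,k_r]$ while remaining undiscovered, so that extending the search only at the frontier loses nothing; after that, the counting argument above is routine.
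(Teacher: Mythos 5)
Your proposal is correct and follows essentially the same route as the paper: the paper likewise bounds the two steps separately, declaring Step~1's $\BigO{n}$ cost ``readily seen'' and bounding Step~2 by counting $m_i+2$ reachability checks per dequeue (the $m_i$ newly enqueued vertices plus two failing frontier tests), summing to at most $3|V'|$ --- exactly your productive/unproductive charging argument. Your explicit interval representation of reachability and the disjoint-scan-interval argument for Step~1 merely fill in details the paper leaves implicit.
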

\begin{proof}
The $\BigO{n}$ complexity of the first step of NBI is readily seen. In the second step, the time complexity is dominated by updating the queue at each iteration. Let $k$ denote the number of iterations in step 2. Note that we only check $m_i+2$ vertices at iteration $i$, where $m_i$ is the new vertices that have been enqueued at this iteration and $\sum_{i=1}^k m_i\leq |V'|$. Also $k$ is bounded by $|V'|$. Hence the total time complexity of this step is bounded by  $\sum_{i=1}^{k} (m_i+2)\leq 3|V'|$. We thus arrive at the theorem.
\end{proof}

\vspace{\SubsectionVspace}
\subsection{$1.5$-D Networks}
We now consider the $1.5$-D problem where in-network nodes are located on a line and eavesdroppers are located in a $d$-dimensional space that contains the line network. We focus on the disk propagation model. A unit cost is incurred for each in-network node that hears the message
and a non-negative cost $c$ is incurred for each eavesdropper that hears the message.
The objective is to find a path $L^*$ from $s$ to $t$ with the minimum total cost:
\begin{align}
L^*&\defeq\arg\min_{L=\{e_1,\ldots,e_m\}}\{\sum_{v\in A(L)}c(v)\}
\end{align}
where $c(v)$ is the cost for vertex $v$, and $A(L)$ is the covered area of path $L$ as defined in
\eqref{eq:AL}.

Based on Theorem~\ref{thm:NBI_property}, it is easy to see that NBI provides the optimal solution to the 1.5-D thinnest path
problem without the knowledge of the eavesdroppers locations. More specifically,
no algorithm, even with the complete knowledge of the locations of the eavesdroppers, can obtain a thinner path than NBI
which does not require the location knowledge of the eavesdroppers.

\vspace{\SectionVspace}
\section{Approximation Algorithms}\label{sec:approx}
In this section, we introduce two approximation algorithms for the thinnest path problem and analyse their performance in different types of hypergraphs.

\vspace{\SubsectionVspace}
\subsection{Shortest Path Based Approximation Algorithm}
Given a general directed hypergraph $H$ with source vertex $s$ and destination vertex $t$, we set the weight of a hyperedge to be the number of destination vertices in this hyperedge:
\begin{align}
	w(e)\defeq |T_e| 	
\end{align} 
The shortest hyperpath algorithm from $s$ to $t$ is then obtained under this weight definition as an approximation of the thinnest path. The following theorem gives performance of this shortest path based algorithm (SPBA). 
\begin{theorem}
The SPBA algorithm provides a $\sqrt{\frac{n}{2}}$-approximation for TP in general directed hypergraphs, a $2(1+2\alpha)^d$-approximation for $d-$dimensional ring hypergraphs with $\alpha=\frac{\max_{v_i\in V} R_i }{\max\{\min_{v_i\in V} r_i,\, \min_{u,v\in V} d(u,v)\}}$. Additionally, the ratio $\sqrt{\frac{n}{2}}$ of the SPBA algorithms is asymptotically tight even in $2$-D disk hypergraphs.
\label{thm:approx}
\end{theorem}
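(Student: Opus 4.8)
The plan is to compare, for any path $L=\{e_1,\ldots,e_m\}$, its \emph{width} $W(L)=|\widehat L|$ with its SPBA-cost $C(L)\defeq\sum_{e\in L}w(e)=\sum_{i=1}^m|T_{e_i}|$. Two elementary inequalities drive everything. First, since $\widehat L=\bigcup_i T_{e_i}\cup\{s\}$ is a union, $W(L)\le C(L)+1$, with near-equality exactly when the hyperedges on $L$ are nearly disjoint. Second, writing $L^*$ for a thinnest path, $\hat L$ for the path returned by SPBA, and $W^*=W(L^*)$, the optimality of $\hat L$ under the weights $w(\cdot)$ gives $C(\hat L)\le C(L^*)$. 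Chaining these, $W(\hat L)\le C(\hat L)+1\le C(L^*)+1$, so the whole problem reduces to bounding the minimum SPBA-cost in terms of the optimal width $W^*$ and the instance size $n$.

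For general directed hypergraphs I would balance two estimates of $W(\hat L)$. The trivial bound $W(\hat L)\le n$ handles the regime where $W^*$ is large, giving ratio $\le n/W^*$. In the opposite regime I bound the minimum SPBA-cost quadratically in $W^*$: the backbone relays of a simple path all lie in its cover, so a thinnest path has at most $W^*-1$ hops, each hyperedge has at most $W^*-1$ destinations, and --- crucially --- whenever hyperedges overlap heavily there is a cheaper ``shortcut'' path (by the monotonicity of the hyperedge family, the minimal hyperedge reaching any backbone node is a subset of the one actually used). Establishing that the minimum SPBA-cost is therefore at most $\binom{W^*}{2}=\tfrac12 W^*(W^*-1)$, one gets $W(\hat L)\le\tfrac12 (W^*)^2+\BigO{W^*}$. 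Optimizing $\min\{\,n,\ \tfrac12(W^*)^2\,\}/W^*$ over $W^*$ (the two branches cross at $W^*=\sqrt{2n}$) yields the claimed ratio $\sqrt{n/2}$. The delicate $\tfrac12$ constant --- i.e.\ upgrading the easy bound $\le (W^*-1)^2$ to $\le\binom{W^*}{2}$ --- is, I expect, the main obstacle, and is exactly what separates the final answer from the weaker $\sqrt n$.

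For $d$-dimensional ring hypergraphs I would replace the combinatorial cost bound by a geometric packing bound. Under the disk/ring model, $C(L^*)=\sum_i|T_{e_i}|$ counts each covered vertex once for every disk $D_{s_{e_i},r_{e_i}}$ containing it, so $C(L^*)\le\kappa\,W^*$ where $\kappa$ is the maximum number of these disks covering a common vertex $v$. Every such disk has its centre within $R_{\max}=\max_i R_i$ of $v$, and the centres are distinct vertices, hence mutually separated by at least $\delta=\max\{\min_i r_i,\ \min_{u,w}d(u,w)\}$ (distinct vertices are at least $\min_{u,w}d(u,w)$ apart, and the minimum range $r_i$ sharpens this). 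Placing disjoint balls of radius $\delta/2$ around the centres inside the ball of radius $R_{\max}+\delta/2$ about $v$ and comparing volumes bounds $\kappa\le(1+2\alpha)^d$ with $\alpha=R_{\max}/\delta$. Then $W(\hat L)\le C(L^*)+1\le(1+2\alpha)^d W^*+1\le 2(1+2\alpha)^d W^*$ since $W^*\ge1$, which is the stated ratio.

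Finally, for asymptotic tightness in $2$-D disk hypergraphs I would exhibit a family matching the balancing point $W^*\approx\sqrt{2n}$, $W(\hat L)\approx n$. The instance offers $s$ two routes to $t$: a \emph{thin-but-costly} route of about $\sqrt{2n}$ heavily overlapping hops that together cover only $\Theta(\sqrt n)$ vertices (so $W^*\approx\sqrt{2n}$) yet has SPBA-cost $\approx(W^*)^2\approx 2n$; and a \emph{cheap-but-wide} route whose hyperedges are essentially disjoint, covering $\Theta(n)$ distinct vertices at SPBA-cost $\approx n<2n$. Because SPBA minimizes total destination count it is forced onto the wide route, giving $W(\hat L)/W^*\to\sqrt{n/2}$. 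Realizing both routes simultaneously as a single $2$-D disk hypergraph --- clustering the overheard vertices so that small-range disks of successive relays re-cover them while any low-cost route must fan out, and avoiding spurious overhearing at line crossings --- is the technical heart of this part, for which I would reuse the directed-overhearing and crossing gadgets developed for Theorem~\ref{thm:NP-2ddh}.
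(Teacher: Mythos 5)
Your proposal follows the same skeleton as the paper's proof: the chain $W(\hat L)\le C(\hat L)+1\le C(L^*)+1$, a quadratic bound on the optimal path's SPBA-cost balanced against the trivial bound $n$ at the crossover $W^*=\sqrt{2n}$, a sphere-packing bound for ring hypergraphs, and a two-route instance for tightness. However, two key steps have genuine gaps. The first is the bound $C(L^*)\le\binom{W^*}{2}$, which you yourself flag as ``the main obstacle'' and do not establish; moreover, your gesture toward it cannot deliver it. Monotonicity of the hyperedge family is not a property of general directed hypergraphs (the theorem and the paper's proof make no use of it), and replacing a used hyperedge by a minimal one only lowers the cost of a \emph{given} path; it yields no relation between $C(L^*)$ and $W^*$. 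The paper's argument is different and short: choose $L_{opt}=\{e_1,\ldots,e_k\}$ to be a thinnest path with the \emph{fewest hyperedges}; then $T_{e_i}$ can contain neither $s_{e_i}$ nor any vertex of $\{s_{e_{i+2}},\ldots,s_{e_k},t\}$ (otherwise skipping intermediate hyperedges gives a path of no larger width with fewer hyperedges), hence $|T_{e_i}|\le W^*-(k-i+1)$; summing over $i$ and using $k\le W^*-1$ gives exactly $W^*(W^*-1)/2$. This exclusion-counting, not ``overlap implies shortcut,'' is what produces the arithmetic-series saving behind the constant $\tfrac12$.

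The second gap is in the ring case: your packing premise is false. You claim the centers of all disks on $L^*$ covering a common vertex $v$ are pairwise at least $\delta=\max\{\min_i r_i,\,\min_{u,w}d(u,w)\}$ apart, but \emph{consecutive} relays violate this: by the definition of a hyperpath, $s_{e_{i+1}}\in T_{e_i}$, so adjacent sources lie within each other's transmission disks and can be far closer than $\min_i r_i$. Only sources of \emph{non-adjacent} hyperedges enjoy the $\delta$-separation (again via shortcut-freeness of a fewest-hyperedge thinnest path). The paper repairs exactly this by extracting from $E_v$ a pairwise non-adjacent subset $E'_v$ with $|E_v|\le 2|E'_v|$ and packing only $E'_v$; this halving is precisely where the factor $2$ in $2(1+2\alpha)^d$ comes from. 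A symptom of the error: if your separation claim were true, you would have proved the ratio $(1+2\alpha)^d+1/W^*$, roughly twice as strong as the theorem, since you spend your factor of $2$ merely to absorb a ``$+1$.'' Finally, your tightness figures are internally inconsistent: a thin route of $\approx\sqrt{2n}$ hops with cost $\approx 2n$ would require hyperedges reaching relays more than one hop ahead, which lets SPBA shortcut along the thin route and destroys the example; a shortcut-free thin route is capped at cost $\approx\binom{W^*}{2}\approx n$, and the paper's red/blue construction (overlapping disks with ranges $2^{i-2}$ on a line versus a disjoint detour chain) is calibrated accordingly.
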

\begin{proof}
	See \ref{app:SPBA}.
\end{proof}

\vspace{\SubsectionVspace}
\subsection{Tree Structure Based Approximation Algorithm}
Approximation occurs in two places in SPBA. First, the width of a path is approximated by the sum of the width of the hyperedges on that path. Second, the thinnest path to a vertex is assumed to go through the thinnest path to one of its incoming neighbors. The first approximation can be avoided while maintaining the polynomial nature of the approximation algorithm. In particular, we can ensure that the width of a path is correctly obtained by using the set union operation instead of summation. The assumption on the tree structure of the thinnest paths still allows us to use Dijkstra's algorithm with some modifications. Specifically, for each vertex, we need to store the current thinnest path from $s$ to this vertex rather than only the width of this path and the parent of this vertex on this path. This allows us to take the set union operation when we update the neighbors of this vertex. Given below is the performance of this tree structure based algorithm (TSBA). 
%Below is the pseudo-code for the TSBA algorithm.
%
%\begin{algleo}
%\linonumber {\bf TSBA($H=(V,E),s,t$)}.
%	\begin{algleo}
%	\li $L[s]\leftarrow \emptyset$; Enqueue$(M,(s,W(L[s])))$
%	\li \While $M$ is not empty
%		\begin{algleo}
%		\li $z=Dequeue(M)$
%		\li \Foreach $e$ s.t. $S_e=z$ and $u\in T_e$
%			\begin{algleo}
%				\li \If $W(L[v]\cup T_e)<W(L[u])$			
%				\begin{algleo}
%					\li $L[u]\leftarrow L[v]\cup T_e$
%					\li Enqueue or Update\\($M,(u,W(L[u]))$)
%				\end{algleo}
%				\li \End;
%			\end{algleo}
%		\li \End;
%		\end{algleo}
%	\li \End;
%	\li \Return $L[t]$
%	\end{algleo}
%\end{algleo}

\begin{theorem}\label{thm:dpapprox}
The TSBA algorithm provides a $\frac{n}{2\sqrt{n-1}}$-approximation  for general directed hypergraphs, $2(1+2\alpha)^d$-approximation for $d-$dimensional ring hypergraphs with $\alpha=\frac{\max_{v_i\in V} R_i }{\max\{\min_{v_i\in V} r_i,\, \min_{u,v\in V} d(u,v)\}}$. Additionally, the ratio $\frac{n}{2\sqrt{n-1}}$ of the TSBA algorithm is tight in general directed hypergraphs and asymptotically tight in disk hypergraphs in the worst case.
\end{theorem}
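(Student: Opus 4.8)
The plan is to compare the width $W_A=W(L_A)$ of the path $L_A$ returned by TSBA against the optimal width $W^\ast=W(L^\ast)$ by combining two upper bounds on $W_A$ and optimizing the worse one over $W^\ast$. First I would record the basic structural fact that, because TSBA retains for each vertex an \emph{actual} thinnest path whose width is evaluated by set union and relaxes edges in Dijkstra order, $L_A$ is a minimum-width path among all ``tree-structured'' candidates (one retained path per vertex). In particular $W_A\le W(L_S)$ for the path $L_S$ produced by SPBA, since $L_S$ is itself such a candidate; this observation will let me reuse the SPBA packing bound for ring hypergraphs. The approximation guarantee then rests on two bounds: the trivial $W_A\le n$, and a bound quadratic in $W^\ast$.

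For the quadratic bound I would write $L^\ast=(e_1,\dots,e_m)$ with relay vertices $s=u_0,u_1,\dots,u_m=t$ and argue by induction along $L^\ast$ that TSBA's label satisfies $\omega(u_j)\le \omega(u_{j-1})+|T_{e_j}\setminus \widehat{P_{u_{j-1}}}|$, where $P_{u_{j-1}}$ is TSBA's retained path to $u_{j-1}$. Here the \emph{set-union} width, rather than the additive surrogate $\sum_e|T_e|$ used by SPBA, is exactly what avoids double counting and saves a constant factor. Telescoping yields $W_A\le 1+\sum_{j}|T_{e_j}|\le 1+m\cdot\max_j|T_{e_j}|$. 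Since the $m$ relay sources are distinct vertices of the cover and the widest hyperedge's destination set is essentially disjoint from them, one gets $W^\ast\ge m+\max_j|T_{e_j}|-\BigO{1}$, so AM--GM gives $m\cdot\max_j|T_{e_j}|\le \big(W^\ast/2\big)^2+\BigO{W^\ast}$, i.e. $W_A\le \tfrac14(W^\ast)^2+\BigO{W^\ast}$. Balancing this against $W_A\le n$ — the two cross at $W^\ast=2\sqrt{n-1}$ — yields $W_A/W^\ast\le n/(2\sqrt{n-1})$.

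For $d$-dimensional ring hypergraphs I would inherit the ratio from SPBA rather than reprove it: because $W_A\le W(L_S)$ and Theorem~\ref{thm:approx} already bounds $W(L_S)$ by $2(1+2\alpha)^d\,W^\ast$ via a packing estimate (the covered area $A(L^\ast)$ is a union of balls, and $\min_i r_i$ together with the minimum inter-node distance controls how many in-network vertices a single ball can contain), the same factor $2(1+2\alpha)^d$ carries over to TSBA, whose union-based width is no larger. I would simply re-invoke that geometric packing step.

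Finally, for tightness I would build, with $n=q^2+1$, a layered gadget of $q$ hops in which the optimum reuses a single shared destination set of size $\approx q$, so that $q$ hops each of cover $\approx q$ pile up to width only $W^\ast\approx 2q=2\sqrt{n-1}$, whereas every tree-structured route to $t$ is forced to accumulate $q$ pairwise-disjoint blocks of $q$ fresh vertices, giving $W_A\approx q^2=n-1$; the ratio then approaches $n/(2\sqrt{n-1})$, and in the general (unconstrained) hypergraph case the construction can be made exact. I expect two main obstacles: (i) pinning the exact coefficient $\tfrac14$ in the quadratic bound, i.e. verifying that the union-telescoping genuinely saves the factor of two over SPBA and that AM--GM is applied against the correct inequality $W^\ast\ge m+\max_j|T_{e_j}|$; and (ii) realizing the tight gadget under the disk model, where the combinatorial freedom used in the general construction is limited by planar/packing geometry — which is precisely why only \emph{asymptotic} tightness is claimed for disk hypergraphs.
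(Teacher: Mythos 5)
Your proposal breaks at its first structural claim, and the error propagates. TSBA is \emph{not} a minimizer over ``tree-structured'' candidates: it is a greedy Dijkstra-style procedure that commits to one retained path per vertex (with arbitrary tie-breaking), and because the union-based width has no optimal-substructure property, a committed prefix can poison every later update. If your claim were true it would prove far too much: the thinnest path $L_{opt}$ is itself a tree-structured candidate (retain its prefixes at its own relay vertices), so ``minimum over tree-structured candidates'' would equal the optimum and TSBA would be exact --- contradicting NP-completeness of TP, and contradicting the very tightness gadget you build at the end, where TSBA returns width $n$ against an optimum of $2\sqrt{n-1}$ precisely because ties are broken adversarially. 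Concretely, the paper's Fig.~\ref{fig:spdp} is an instance where SPBA returns the thinnest path and TSBA does not, so your inequality $W_A\le W(L_S)$ is simply false, and with it your entire ring-hypergraph argument collapses. The paper's actual route is different: it proves by induction \emph{along the SPBA shortest path} $L_1(t)=\{e_1,\ldots,e_k\}$ that $W(L_2(v_i))\le\mathcal{L}(L_1(v_i))+1$, i.e., TSBA's union-based width is bounded by the additive \emph{length} of the SPBA path (not its width), and then re-uses the SPBA chain $\mathcal{L}(L_1(t))\le\mathcal{L}(L_{opt})\le 2(1+2\alpha)^d\,W(L_{opt})-1$ coming from the sphere-packing step. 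That is the correct sense in which the ring factor ``carries over.''

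The general-hypergraph bound has a second genuine gap. Your one-step inequality $\omega(u_j)\le\omega(u_{j-1})+|T_{e_j}\setminus \widehat{P}_{u_{j-1}}|$ is exactly the paper's Lemma~\ref{lma:dp_property}, but you then discard the subtraction and telescope to $W_A\le 1+\sum_j|T_{e_j}|$, after which you need $W^\ast\ge m+\max_j|T_{e_j}|-\BigO{1}$; that inequality is false. Minimality of $L_{opt}$ only forbids $T_{e_j}$ from containing its own source and the \emph{later} relays $s_{e_{j+2}},\ldots,s_{e_m},t$; it may contain all \emph{earlier} relays, so one can have $\max_j|T_{e_j}|\approx m$ while $W^\ast=m+1$. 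Moreover, once weakened to $\sum_j|T_{e_j}|$, no repair recovers the coefficient $\tfrac14$: taking $T_{e_j}\supseteq\{u_0,\ldots,u_{j-2}\}$ gives $\sum_j|T_{e_j}|\approx (W^\ast)^2/2$, so this route caps out at the SPBA-type bound and ratio $\sqrt{n/2}$, not $n/(2\sqrt{n-1})$. The paper keeps the subtraction throughout: with $U$ the set of relays of $L_{opt}$ together with $t$, each step costs at most $|T_{e_i}\setminus U|+1$, and since $\bigl|\bigcup_i (T_{e_i}\setminus U)\bigr|=W(L_{opt})-k-1$, every term is at most $W(L_{opt})-k-1$; hence $W(L_2)\le k+1+k\bigl(W(L_{opt})-k-1\bigr)\le 1+W^2(L_{opt})/4$ after maximizing the quadratic over $k$, and the two-case comparison at $W(L_{opt})\lessgtr 2\sqrt{n-1}$ yields the \emph{exact} ratio $n/(2\sqrt{n-1})$ (your $\BigO{W^\ast}$ slop would forfeit the tightness claim). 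Finally, your tightness gadget is essentially the paper's Fig.~\ref{fig:dpexample1} (one shared super vertex for the optimum versus pairwise-disjoint ones accumulated by TSBA, with clustered replication to absorb auxiliary vertices in the disk implementation), but its validity rests on adversarial tie-breaking in TSBA's greedy updates --- not on ``every tree-structured route'' being forced --- which is again inconsistent with your first step.
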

\begin{proof}
See \ref{app:TSBA}.
\end{proof}

\subsection{Performance Comparison}
While the approximation ratio of TSBA is better than that of SPBA, these are worst-case performances and do not imply that TSBA outperforms SPBA in every case as shown in Fig~\ref{fig:spdp}. 
\begin{figure}[htbp]
\centering
\psfrag{v}[c]{$v$}
\psfrag{s}[c]{s}
\psfrag{t}[c]{t}
\scalefig{0.3} \epsfbox{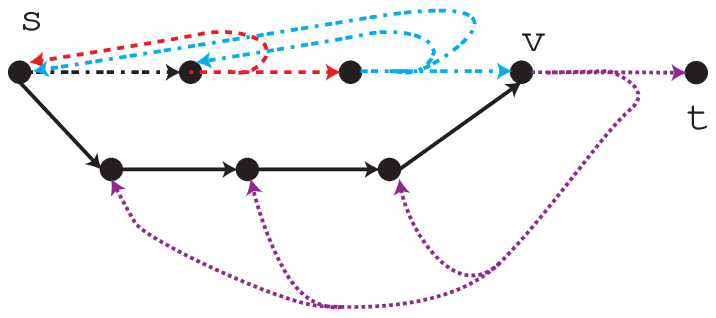}
\caption{An example where SPBA outperforms TSBA (SPBA returns the path that goes through all solid black hyperedges to $v$ and then to $t$, which is the thinnest path; TSBA returns the path that contains all colored dash hyperedges, which is not the thinnest path).}\label{fig:spdp}
\end{figure}

%
%\begin{figure}[h!]
%\centering
%\psfrag{UDH}[c]{\footnotesize Ratio vs $n$ (Unit Disk Hypergraph)}
%\psfrag{SPBA}{\tiny SPBA}
%\psfrag{TSBA}{\tiny TSBA}
%\scalefig{0.45} \epsfbox{simu_udh.eps}
%\caption{Average ratio vs. $n$ in unit disk hypergraphs ($\rho=1.5, R_{min}=1, R_{max}=1$, the average is taken over 1000 random hypergraphs). }\label{fig:simu_udh}
%\end{figure}
%
%\begin{figure}[h!]
%\centering
%\psfrag{DH_1_3}[c]{\footnotesize Ratio vs $n$ (Disk Hypergraph, $1\leq R_i \leq 3$)}
%\psfrag{SPBA}{\tiny SPBA}
%\psfrag{TSBA}{\tiny TSBA}
%\scalefig{0.45} \epsfbox{simu_dh_1_3.eps}
%\caption{Average ratio vs. $n$ in disk hypergraphs ($\rho=1.5, R_{min}=1, R_{max}=3$, the average is taken over 1000 random hypergraphs).}\label{fig:simu_dh_1_3}
%\end{figure}

\begin{figure}[h!]
\centering
\psfrag{DH_1_5}[c]{}%\footnotesize Ratio vs $n$ (Disk Hypergraph, $1\leq R_i \leq 5$)}
\psfrag{SPBA}{\scalebox{.35}{SPBA}}
\psfrag{DPBA}{\scalebox{.35}{TSBA}}
\scalefig{0.4} \epsfbox{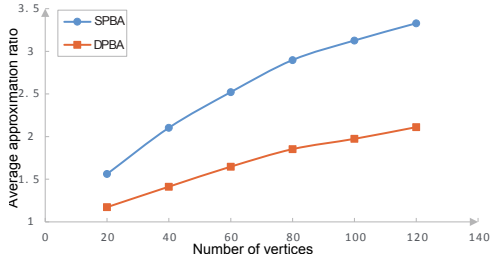}
\caption{Average performance of SPBA and TSBA (a 2-D network with $n$ vertices uniformly and randomly distributed on a $\frac{n}{\rho}\times \frac{n}{\rho}$ square with $\rho=1.5$; the maximum range of each vertex randomly chosen from interval $[R_{min}, R_{max}]$ with $R_{min}=1, R_{max}=5$; average taken over 1000 such random $2$-D disk hypergraphs).}\label{fig:simu_dh_1_5}
\end{figure}
%
%\begin{figure}[h!]
%\centering
%\psfrag{UDG}[c]{\footnotesize Ratio vs $n$ (Unit Disk Graph)}
%\psfrag{SPBA}{\tiny SPBA}
%\psfrag{TSBA}{\tiny TSBA}
%\scalefig{0.45} \epsfbox{simu_udg.eps}
%\caption{Average ratio vs. $n$ in unit disk graphs ($\rho=1.5, R_{min}=1, R_{max}=1$, the average is taken over 1000 random graphs).}\label{fig:simu_udg}
%\end{figure}
%
%\begin{figure}[h!]
%\centering
%\psfrag{DG_1_3}[c]{\footnotesize Ratio vs $n$ (Disk Graph, $1\leq R_i \leq 3$)}
%\psfrag{SPBA}{\tiny SPBA}
%\psfrag{TSBA}{\tiny TSBA}
%\scalefig{0.45} \epsfbox{simu_dg_1_3.eps}
%\caption{Average ratio vs. $n$ in disk graphs ($\rho=1.5, R_{min}=1, R_{max}=3$, the average is taken over 1000 random graphs).}\label{fig:simu_dg_1_3}
%\end{figure}
%
%\begin{figure}[h!]
%\centering
%\psfrag{DG_1_5}[c]{\footnotesize Ratio vs $n$ (Disk Graph, $1\leq R_i \leq 5$)}
%\psfrag{SPBA}{\tiny SPBA}
%\psfrag{TSBA}{\tiny TSBA}
%\scalefig{0.45} \epsfbox{simu_dg_1_5.eps}
%\caption{Average ratio vs. $n$ in disk graphs ($\rho=1.5, R_{min}=1, R_{max}=5$, the average is taken over 1000 random graphs).}\label{fig:simu_dg_1_5}
%\end{figure}

Fig.~\ref{fig:simu_dh_1_5} shows the average performance of these two algorithms. We see that both algorithms have relatively small approximation ratios growing sub-linearly with the number of vertices. In general, TSBA outperforms SPBA on average, as also demonstrated in a number of other simulation results (omitted due to the space limit). 

%\section{Conclusion}
%We have presented, to our best knowledge, the first study of the thinnest path problem.
%We have shown that the thinnest path problems in UMMES hypergraphsis NP-complete and strongly inapproximable; the thinnest path problems in $2-$dimensional unit disk hypergraphs, $2-$dimensional disk graphs and $3-$dimensional unit disk graphs are NP-complete. We have developed two approximation algorithms that applies to all types of hypergraphs and analysed approximation ratios for different types of hypergraphs. We have proposed a linear-complexity algorithm based on nested backward induction for finding the thinnest path in one-dimensional disk hypergraphs as well as one-dimensional networks in the presence of $d-$dimensional eavesdroppers.

\vspace{\SectionVspace}
\section{Conclusion}\label{sec:con}
We studied the complexity and developed optimal and approximation algorithms for the thinnest path problem for secure communications in wireless ad hoc networks. In establishing the NP-completeness of the problem, our techniques of using directed crosses and exposed disk hypergraphs may spark new tools for complexity studies in geometrical hypergraphs and graphs. The bounding techniques and the use of sphere packing results in analyzing the performance of the two approximation algorithms may also find other applications in algorithmic analysis.

\appendices

%
%\section{Proof of Theorem~\ref{thm:NP_mh}}\label{app:A}
%The proof is based on the reduction from red-blue set cover problem, similar to the reduction in \cite{Chechik&etal2012} but with two main differences. First, since we are building an MSE hypergraph, for each blue vertex $b_{ij}$ we label its neighbors by the following order: those red super vertices in $S_i$, those blue vertices in the same column and those blue ones in the next column. Fig.~\ref{fig:hhardapprox} demonstrates the ordering of $b_{11}$'s neighbors. Then, the hyperedges sourced at $b_{ij}$ is constructed based on this ordering and the monotonic property. Second, there is no need to use a hyper-vertex to guarantee that no red super vertex can transmit, simply because there is no outcoming hyperedge from any red super vertex. 
%
%Then it is easy to see that any path from $s$ to $t$ contains all blue vertices. And the thinnest path corresponds to the minimal red-blue set cover in the original problem.
%
\vspace{\SectionVspace}
\section{Proof of Theorem~\ref{thm:NP-2ddh}}\label{app:NP_2ddh}

\subsection{Reduction from MDS to TP in A General Directed Hypergraph $H_1$}
Consider an MDS problem in a graph $G$ with $n$ vertices $v_1,\ldots,v_n$. We construct a directed hypergraph $H_1$ based on $G$ as follows. The vertex set of $H_1$ includes the $n$ vertices of $G$ augmented by a destination vertex $v_{n+1}$ and $n$ super vertices $v_1^s,\ldots,v_n^s$. A super vertex $v^s_i$ corresponds to the normal vertex $v_i$ and is a set of $n_s$ normal vertices. The hyperedges in $H_1$ are all rooted at the normal vertices $v_1,\ldots,v_n$. Specifically, rooted at $v_i$ ($1\leq i \leq n$) are  $k_i+1$ directed hyperedges, where $k_i$ is the degree of $v_i$ in $G$. Each hyperedge rooted at $v_i$ has two destinations: $v_{i+1}$ and a super vertex $v^s_j$ whose corresponding normal vertex $v_j$ dominates\footnote{A vertex in a graph is dominated by itself and any of its one-hop neighbors.} $v_i$ in the original graph $G$. Fig.~\ref{fig:RSH} is an example illustrating the construction of ${H_1}$ from $G$.

From the construction of $H_1$, we see that any path from $v_1$ to $v_{n+1}$ must traverse through all normal vertices one by one. There are multiple hyperedges leading from $v_i$ to $v_{i+1}$, each involving a super vertex that corresponds to a dominating node of $v_i$ in $G$. Thus, choosing a hyperedge going from $v_i$ to $v_{i+1}$ is equivalent to choosing a dominating node of $v_i$ in $G$. Since every path from $v_1$ to $v_{n+1}$ includes all the $n+1$ normal vertices, the thinnest path is given by the one with the minimum number of super vertices, thus leading to the MDS in $G$. At this point, the size $n_s$ of a super vertex can be any positive integer. As will become clear later, to implement $H_1$ under a 2-D disk model, additional normal vertices need to be added. As a consequence, paths from $v_1$ to $v_{n+1}$ may include different number of normal vertices. To preserve the reduction, we need to make sure that the width of a path is dominated by the number of super vertices it covers. This can be achieved by choosing an $n_s$ sufficiently large (see \ref{subsec:appA-mp}) 

\begin{figure}[htbp]
\centering
\begin{psfrags}
\psfrag{1}[c]{$v_1$}
\psfrag{2}[c]{$v_2$}
\psfrag{3}[c]{$v_3$}
\psfrag{4}[c]{$v_4$}
\psfrag{5}[c]{$v_5$}
\psfrag{t}[c]{$v_6$}
\psfrag{6}[c]{$v^s_1$}
\psfrag{7}[c]{$v^s_2$}
\psfrag{8}[c]{$v^s_3$}
\psfrag{9}[c]{$v^s_4$}
\psfrag{0}[c]{$v^s_5$}
\psfrag{a}[c]{$v_1$}
\psfrag{b}[c]{$v_2$}
\psfrag{c}[c]{$v_3$}
\psfrag{d}[c]{$v_4$}
\psfrag{e}[c]{$v_5$}
\psfrag{x}[c]{(a)}
\psfrag{y}[c]{(b)}
%\scalefig{0.2}\epsfbox{MDS.eps}
%\scalefig{0.3}\epsfbox{RSG_S.eps}
\scalefig{0.45}\epsfbox{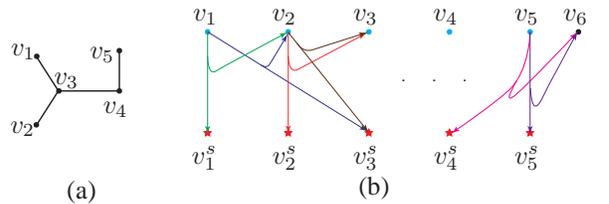}
\end{psfrags}
\caption{The construction of $H_1$ from $G$: (a) the graph $G$; (b) the hypergraph ${H_1}$ ($v_1$ is dominated by $v_1$ and $v_3$ in $G$. We thus have two hyperedges rooted at $v_1$ in $H_1$: one reaches $(v_2,v_1^s)$, the other $(v_2,v_3^s)$.).}
\label{fig:RSH}
\end{figure}
The following lemma formally establishes the correctness of the reduction.
\begin{lemma}\label{lma:gelh}
  There is a dominating set with size $k$ in $G$ if and only if there is a path from $v_1$ to $v_{n+1}$ in $H_1$ with width $kn_s+n+1$.
\end{lemma}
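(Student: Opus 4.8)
The plan is to prove both directions by exploiting the rigid layered structure that the construction of $H_1$ forces on any $v_1$--$v_{n+1}$ path, and then reading off the width as a simple linear function of the number of distinct super vertices covered.

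First I would establish a structural claim: every hyperpath $L$ from $v_1$ to $v_{n+1}$ must traverse the normal vertices $v_1, v_2, \ldots, v_{n+1}$ in order, using exactly one hyperedge rooted at each $v_i$ for $1 \le i \le n$. This is forced because the only vertices that root hyperedges are the normal vertices $v_1,\ldots,v_n$, while $v_{n+1}$ and the constituents of the super vertices root nothing; moreover every hyperedge rooted at $v_i$ has $v_{i+1}$ among its destinations. Hence from the current reachable set the path can only continue through a hyperedge rooted at the current normal vertex, which advances the index by exactly one, and an easy induction pins the path to $v_1 \to v_2 \to \cdots \to v_{n+1}$.

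Next I would compute the width. The cover of such a path is $\widehat{L} = \{v_1,\ldots,v_{n+1}\} \cup \bigcup_{i=1}^n \{v^s_{\sigma(i)}\}$, where $v_{\sigma(i)}$ is the dominator of $v_i$ selected by the hyperedge used at step $i$. Assuming, as the construction provides, that the $n$ super vertices are pairwise disjoint blocks of $n_s$ fresh normal vertices, each disjoint from $v_1,\ldots,v_{n+1}$, the width decomposes cleanly as $W(L) = (n+1) + n_s \cdot d$, where $d = |\{\sigma(1),\ldots,\sigma(n)\}|$ is the number of distinct super vertices covered. Thus $W(L) = kn_s + n + 1$ holds precisely when $d = k$.

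For the forward direction, given a dominating set $D$ of size $k$ I would build a path by assigning to each $v_i$ a dominator $\sigma(i) \in D$. Here lies what I expect to be the main obstacle: since the lemma demands width \emph{exactly} $kn_s+n+1$, I must realize \emph{exactly} $k$ distinct super vertices, not fewer. I would resolve this with a self-domination trick, namely that every vertex dominates itself: set $\sigma(j)=j$ for each $v_j\in D$ and pick any dominator in $D$ for the remaining $v_i$, which makes the set of super vertices covered precisely $\{v^s_j : v_j\in D\}$, of size $k$. For the converse, a path of width $kn_s+n+1$ satisfies $d=k$ by the width formula; letting $D=\{v_j : v^s_j \text{ is covered}\}$, each step's hyperedge certifies that the covered $v_{\sigma(i)}\in D$ dominates $v_i$, so $D$ dominates every $v_i$ and is a dominating set of size $k$. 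I would close by noting that both directions rely only on the disjointness of the super vertices and on $n_s$ being a single uniform block size, deferring the quantitative choice of $n_s$ (taken large enough to dominate the auxiliary normal vertices added later) to the 2-D realization step.
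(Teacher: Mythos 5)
Your proof is correct and follows essentially the same route as the paper's: both directions rest on the forced structure that any $v_1$--$v_{n+1}$ hyperpath uses exactly one hyperedge rooted at each $v_i$, so the width is $(n+1)$ plus $n_s$ times the number of distinct super vertices covered, and covered super vertices correspond to dominators. In fact your self-domination trick (setting $\sigma(j)=j$ for $v_j\in D$ so that \emph{exactly} $k$ super vertices are covered) patches a point the paper's forward direction glosses over, since an arbitrary assignment of dominators from $S$ could cover fewer than $k$ super vertices and yield a strictly smaller width than the stated $kn_s+n+1$.
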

\begin{proof}
  First, assume that $G$ has a dominating set $S$ with size $k$. By the definition of dominating set, for each vertex $v_i$ in $G$, there is a vertex $v_j\in S$ that dominates $v_i$. From the construction of $H_1$, there exists a hyperedge $e_i$ ($i=1,\ldots,n$) in $H_1$ directing from $v_i$ to vertex $v_{i+1}$ and super vertex $v^s_{j}$ corresponding to the dominating node $v_j$ in $G$. Thus, the hyperpath $\{e_1,\ldots,e_n\}$ is a path from $v_1$ to $v_{n+1}$ with width $kn_s+n+1$. The width comes from the fact that all $n+1$ vertices in $V_{H_1}$ is on the path along with $k$ super vertices, each consisting of $n_s$ normal vertices.

 Conversely, assume a path from $v_1$ to $v_{n+1}$ in $H_1$ with width $kn_s+n+1$. Based on the construction of $H$, every path from $v_1$ to $v_{n+1}$ consists of $n$ hyperedges rooted at each of the $n$ normal vertices $v_1,\ldots,v_n$. Thus, a path with width $kn_s+n+1$ must contain $k$ super vertices. From the construction of the hyperedges, we conclude that the vertices in $G$ that correspond to those $k$ super vertices along the given path form a dominating set with size $k$.
\end{proof}

\vspace{\SubsectionVspace}
\subsection{A 2-D Grid Representation of $H_1$}

The directed hypergraph $H_1$ obtained above does not satisfy the geometric properties of 2-D disk hypergraphs. To prove Theorem~\ref{thm:NP-2ddh}, we need to modify $H_1$ to a 2-D disk hypergraph $H_2$ while preserving the reduction from MDS in $G$. Our approach is to realizing the topological structure of each hyperedge in $H_1$ by adding additional vertices with carefully chosen locations and maximum ranges to lead from the source vertex to the destination vertices of this hyperedge. The number of additional vertices, however, should be kept at a polynomial order with the problem size to ensure the polynomial nature of the reduction. This can be achieved by adding vertices on a 2-D grid with a constant grid spacing, which allows a constant maximum range thus polynomially many additional vertices. The detailed implementation of $H_1$ under a 2-D disk model is given in the next subsection. As a preparation step, we show in this subsection that the hyperedges in $H_1$ can be represented by line segments of a 2-D grid with a constant grid spacing. 

We first embed the normal vertices of $H_1$ evenly in a horizontal line in a 2-D space (see Fig.~\ref{fig:RSG_R} for an illustration). Below this line is a $2n^2\times 4n^2$ unit grid. There are $4n$ vertical lines between $v_i$ and $v_{i+1}$ ($1\leq i\leq n$) that are partitioned into three zones ($C_i^1$, $C_i^2$, $C_i^3$) of $n$, $2n$, and $n$ vertical lines, respectively. The super vertices are embedded evenly on a horizontal line below the grid. The horizontal position of super vertex $v_i^s$ is between $v_i$ and $v_{i+1}$. 

Next, we specify how a hyperedge traverses on the grid from its source vertex to its destination vertices. Recall that every hyperedge in $H_1$ directs from a normal vertex $v_i$ to a super vertex $v_j^s$ and the next normal vertex $v_{i+1}$. To preserve the reduction, we need to ensure that each hyperedge can only reach its normal vertex destination after reaching its super vertex destination. To facilitate the implementation around the super vertices (see \ref{subsubsec:super}), we designate the middle zone $C_i^2$ between $v_i$ and $v_{i+1}$ for traveling down to super vertex $v_i^s$ and then up to the corresponding normal vertex destination (see region $C_1^2$ in Fig.~\ref{fig:RSG_R}). Each hyperedge involving $v_i^s$ has two designated vertical lines in $C_i^2$ (one for going down to, the other going up from the super vertex). To connect the designated vertical lines in zone $C_i^2$ with the source vertex and then to the normal destination vertex, we designate two horizontal lines for each hyperedge. The traverse of the hyperedge completes by designating one vertical line in $C_i^1$ and one in $C_i^3$ to connect the normal vertices with the corresponding designated horizontal lines. Since there are at most $n^2$ hyperedges, the designed grid size is sufficient to ensure that each hyperedge traverses through a distinct set of line segments in the grid. . 
\begin{figure}[htbp]
\centering
\begin{psfrags}
\psfrag{1}[c]{$v_1$}
\psfrag{2}[c]{$v_2$}
\psfrag{3}[c]{$v_3$}
\psfrag{4}[c]{$v_4$}
\psfrag{5}[c]{$v_5$}
\psfrag{t}[c]{$v_6$}
\psfrag{6}[c]{$v^s_1$}
\psfrag{7}[c]{$v^s_2$}
\psfrag{8}[c]{$v^s_3$}
\psfrag{9}[c]{$v^s_4$}
\psfrag{0}[c]{$v^s_5$}
\psfrag{d}[c]{$C^2_{1}$}
\psfrag{c}[c]{$C^3_{1}$}
\psfrag{e}[c]{$C^1_{1}$}
\scalefig{0.45}\epsfbox{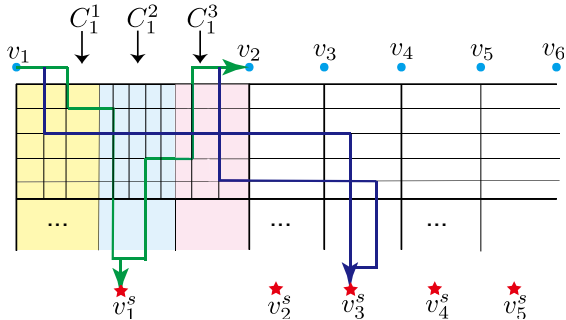}
\end{psfrags}
\caption{A 2-D grid representation of $H_1$ (the two hyperedges rooted at $v_1$ from the example given in Fig.~\ref{fig:RSH} are illustrated in green and blue, respectively).}
\label{fig:RSG_R}
\end{figure}

\vspace{\SubsectionVspace}
\subsection{Implementing $H_1$ under A 2-D Disk Model}
Based on the 2-D grid representation of $H_1$, we can construct a 2-D disk hypergraph $H_2$ that preserves the reduction. Specifically, we place a sequence of evenly spaced normal vertices with a constant maximum range along the line segments in the grid that form each hyperedge of $H_1$. The distance of two adjacent vertices is set to their maximum range. The constant maximum range can be set sufficiently small (say, $\frac{1}{5}$) to avoid overhearing across vertices on different hyperedges that may render the reduction invalid. There are two issues remain to be addressed: the implementation of crosses and that around super vertices.

\subsubsection{Implementaion of Crosses}
The line crossing in the grid representation of $H_1$ make overhearing across hyperedges inevitable. However, by exploiting the freedom of choosing the maximum range for each vertex, we can implement  \emph{directed} crosses that allow us to preserve the reduction. Specifically, when two line segments in the grid representation cross, we can choose the maximum ranges of the vertices along these two lines in such a way that messages transmitted over one line can be heard by vertices on the other but not vise verse. A specific implementation is given in Fig.~\ref{fig:cross_1}.

\begin{figure}[htbp]
\centering
\begin{psfrags}
\scalefig{0.45}\epsfbox{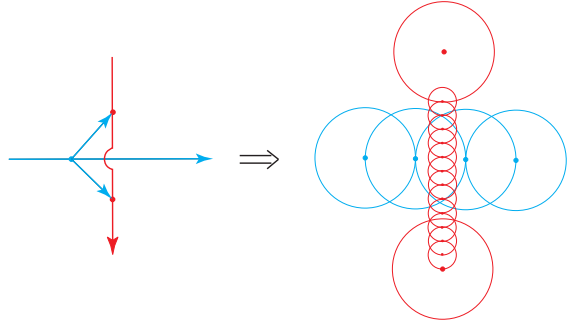}
\end{psfrags}
\caption{A disk hypergraph implementation of a directed cross where the circles represent the maximum range of vertices (messages transmitted on the blue line can be heard by nodes on the red line, but not vise verse).}
\label{fig:cross_1}
\end{figure}

Next, we show how carefully choosing the direction of each cross allows us to preserve the reduction. The cross directions are defined by assigning a level index to each line segment in the grid representation. Specifically, For a hyperedge rooted at $v_i$ in $H_1$, its line segments before and after reaching the super vertex destination have levels $i$ and $i+1$, respectively. Then, each cross has a direction pointing from the higher level segment to the lower one (\ie messages transmitted on the higher level segment can be heard by the vertices along the lower level segment but not vise verse). If the two segments have the save level, the direction of the cross can be arbitrary. To see that this directed implementation of crosses preserves the reduction, we only need to notice that any path from $v_1$ to $v_{n+1}$ still needs to go through all the $n$ normal vertices one by one and must reach one super vertex before reaching the next normal vertex. 

\subsubsection{Implementation Around Super Vertices}
\label{subsubsec:super}
Recall that a super vertex in $H_1$ is a set of $n_s$ normal vertices that have no outgoing hyperedges. It can be implemented by $n_s$ points with zero maximum range and located sufficiently close to each other (so that any path from $v_1$ to $v_{n+1}$ in $H_2$ includes either all of them or none of them).

Consider first the implementation of one incoming hyperedge to a super vertex $v_j^s$. Recall that in the 2-D grid representation of $H_1$, a hyperedge approaches and leaves $v_j^s$ through two vertical lines in zone $C_j^2$ (see Fig.~\ref{fig:RSG_R}). One implementation of this U-turn around $v_j^s$ is to add $6$ normal vertices with specific maximum ranges and locations. As shown in Fig.~\ref{fig:super_2}, these $6$ vertices include three anchor vertices $u_1^-$, $u_1^0$, and $u_1^+$ with maximum range $r$, two interface vertices $\nu_1^-$ and $\nu_1^+$ that connect with the grid, and a bridging vertex $\mu_{11}$, all with maximum range $\frac{r}{2}$. The value of $r$ and the connection with the grid will be specified later.  

\begin{figure}[htbp]
\centering
\begin{psfrags}
\psfrag{r}[c]{\small $r$}
\psfrag{a}[c]{\small $u_1^-$}
\psfrag{b}[c]{\small $u_1^0$}
\psfrag{c}[c]{\small $u_1^+$}
\psfrag{d}[c]{\small $\nu_1^-$}
\psfrag{e}[c]{\small $\nu_1^+$}
\psfrag{f}[c]{\small $\mu_{11}$}
\scalefig{0.40}\epsfbox{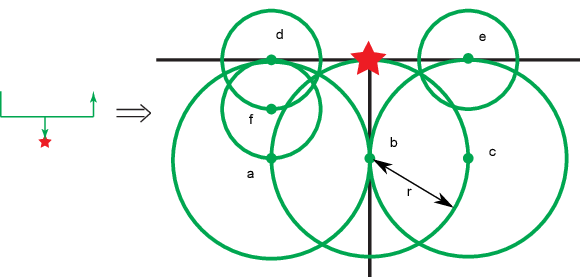}
\end{psfrags}
\caption{Implementation of one hyperedge passing through a super vertex. Starting from $\nu_1^-$, the message traverse to $\nu_1^+$ through $\mu_{11}$, $u_1^-$, $u_1^0$, $u_1^+$. The super vertex hears the message in the transmission from $u_1^0$ to $u_1^+$.}
\label{fig:super_2}
\end{figure}

The challenge remains in the implementation of up to $n$ incoming hyperedges to the same super vertex.   Note that under a 2-D disk model, one can at most have $5$ vertices (even with arbitrary ranges) that reach a common sixth vertex but not each other.
The key to circumvent this difficulty is to allow directed overhearing, similar to the idea behind the implementation of the crosses. Specifically, the reduction is preserved as long as a hyperedge rooted at $v_j$ cannot overhear a message transmitted over a hyperedge rooted at $v_i$ for any $i<j$. The detailed implementation is as follows. The fist step is to designate the vertical lines in zone $C_j^2$ to the incoming hyperedges of $v_j^s$ based on the indexes of their source vertices.   
Specifically, the incoming hyperedge with the smallest source vertex index takes the two most center lines in $C_j^2$, and so on. Consider first the implementation of the two incoming hyperedges (say, $e_1$ and $e_2$) with the smallest source vertex indexes. 
As shown in Fig.~\ref{fig:super_2_2}, we first implement $e_1$ as described above (see Fig.~\ref{fig:super_2}). The structure of the implementation of $e_2$ is similar except that the maximum range of the anchor vertices $u_2^-$, $u_2^0$, and $u_2^+$ is set to $4r$ to present unwanted overhearing. As a consequence, more bridging vertices ($\mu_{21},\mu_{22},\mu_{23}$ with maximum range $\frac{r}{2}$, $r$, and $2r$, respectively) are needed to connect the interface vertex $\nu_2^-$ to the anchor vertex $u_2^-$. Note that no vertices along $e_2$ (the centers of the blue circles in Fig.~\ref{fig:super_2_2}) are in the range of any vertices along $e_1$ (the green circles). The correct direction of overhearing is thus ensured.

The same procedure continues for any additional incoming hyperedges to $v_i^s$, in the ascending order of their source vertex indexes in $H_1$. Note that the range of the anchor vertices in the $k$th hyperedge is $4^kr$, growing exponentially with $k$. The maximum ranges (specifically, $\frac{r}{2},r,\ldots,\frac{4^k}{2}r$) of the bridging vertices $\{\mu_{ki}\}$ are chosen to preserve the polynomial nature of the reduction. In this way, the number of additional vertices for implementing the $k$th hyperedge is $2k+4$, and the total number of additional vertices around one super vertex is at most $n^2+3n$.

\begin{figure}[htbp]
\centering
\begin{psfrags}
\psfrag{a}[c]{\small $u_2^-$}
\psfrag{b}[c]{\small $u_2^0$}
\psfrag{c}[c]{\small $u_2^+$}
\psfrag{d}[c]{\small $\nu_2^-$}
\psfrag{e}[c]{\small $\mu_{21}$}
\psfrag{f}[c]{\small $\mu_{22}$}
\psfrag{g}[c]{\small $\mu_{23}$}
\psfrag{h}[c]{\small $\nu_2^+$}
\psfrag{4r}[c]{\small $4r$}
\scalefig{0.45}\epsfbox{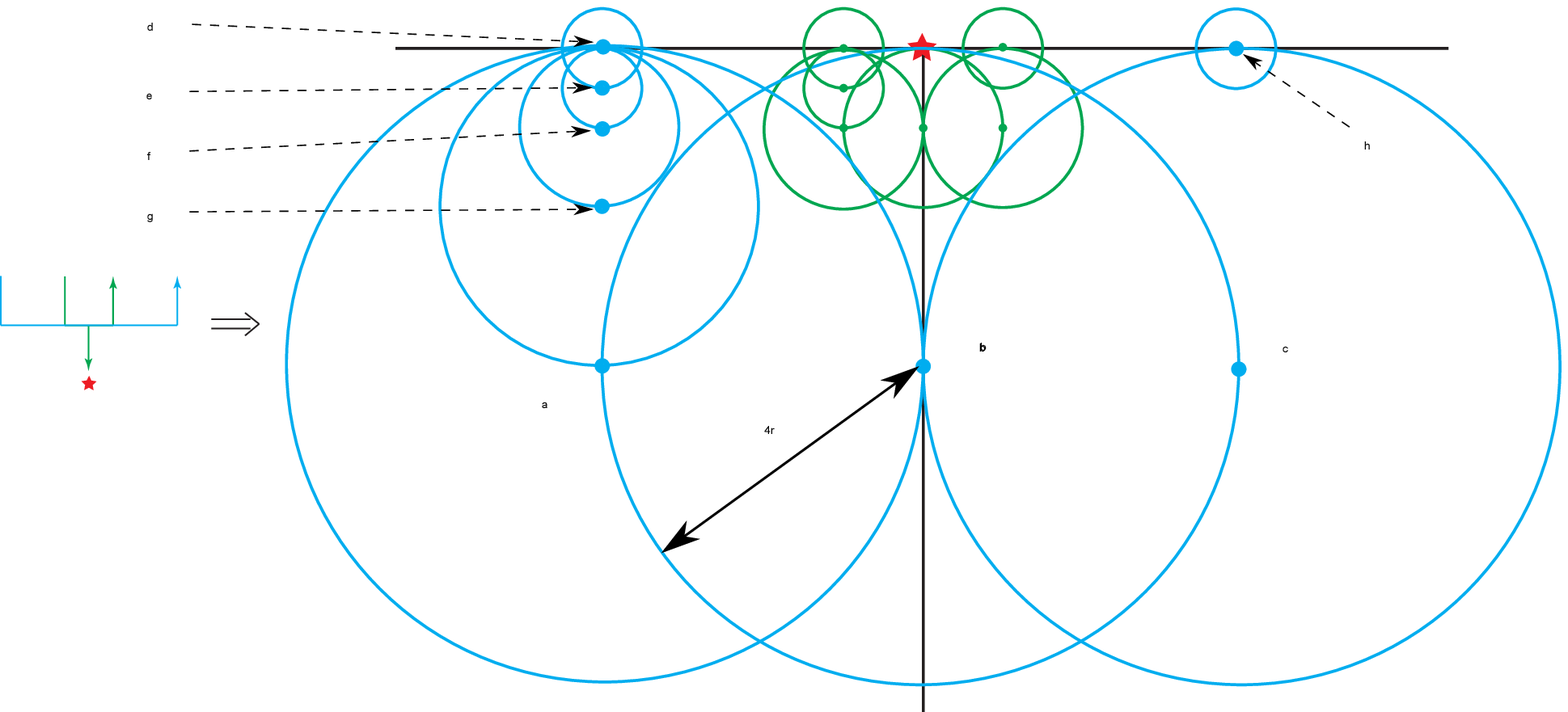}
\end{psfrags}
\caption{Implementation of the second incoming hyperedge to a super vertex.}
\label{fig:super_2_2}
\end{figure}

Next we consider the value of $r$ which should be set sufficiently small to avoid overhearing across hyperedges leading to different super vertices. Note that the width of the area covered by the additional vertices around a super vertex is $4$ times the largest maximum range of the anchor vertices. We thus set $r=4^{-n}n$, considering the distance between two adjacent super vertices being $4n$.  

The last issue is to connect the interface vertices with the grid. Each interface vertex needs to be connected with a designated vertical line in $C_j^2$. While the vertical lines in $C_j^2$ are evenly spaced, the horizontal positions of the interface vertices have an exponential structure due to the exponentially growing range of the anchor vertices. Furthermore, the vertices realizing the vertical lines in the grid have a constant range, whereas the interface vertices have an exponentially smaller range of $r=4^{-n}n$. If we connect them using a sequence of vertices with a constant range, unwanted overhearing will occur near the interface vertices. On the other hand, connecting them using vertices with range $r$ results in an exponential number of additional vertices. To preserve the correctness and the polynomial nature of the reduction, we propose the scheme detailed in Fig.~\ref{fig:gridinterfacegeo}. 
\begin{figure}[htbp]
\centering
\begin{psfrags}
\psfrag{a}[c]{\small $A$}
\psfrag{b}[c]{\small $B$}
\psfrag{c}[c]{\small $C$}
\psfrag{d}[c]{\small $D$}
\psfrag{f}[c]{\small $F$}
\psfrag{o1}[c]{\small $O_1$}
\psfrag{o2}[c]{\small $O_2$}
\psfrag{e1}[c]{\small $E_1$}
\psfrag{e2}[c]{\small $E_2$}
\psfrag{d1}{\tiny $d_1$}
\psfrag{d1'}{\tiny $d_1'$}
\psfrag{d2}{\tiny $d_2$}
\psfrag{d2'}{\tiny $d_2'$}
\psfrag{L}{\small $L$}
\psfrag{o}{\small $O$}
\psfrag{alpha}{\small $\alpha$}
\psfrag{beta}{\small $\beta$}
\psfrag{theta}{\small $\theta$}
\scalefig{0.3}\epsfbox{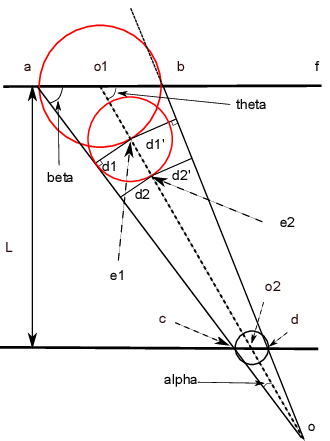}
\end{psfrags}
\caption{Consider first the downward part from the grid to a left interface vertex $\nu_i^-$. Let $O_1$ denote the location of the last vertex on the designated vertical line in the grid, and  $O_2$ the location of $\nu_i^-$. The circles centered at $O_1$ and $O_2$ represents their maximum ranges. Let $A,B$ and $C,D$ denote the intersecting points of these two circles with the horizontal lines at their centers. Let $E_1$ denote the intersection between circle $O_1$ and line $O_1O_2$. Let $d_1$ and $d_1'$ denote the distance between $E_1$ and the two lines $AC$ and $BD$, respectively. Next we draw a circle with radius $r_1=\min\{d_1,d_1'\}$ centered at $E_1$. Let $E_2$ denote the intersection between circle $E_1$ and line $O_1O_2$, and a similar circle centered at $E_2$ is drawn. This procedure is repeated to generate a sequence of circles until the last generated circle covers $O_2$. This sequence of circles $\{E_i\}$ gives the locations and the maximum ranges of the vertices connecting the grid and $\nu_i^-$. The upward part from $\nu_i^+$ to the grid is done with the same procedure except starting from $\nu_i^+$. }
\label{fig:gridinterfacegeo}
\end{figure}

Since the generated sequence of circles $\{E_i\}$ are within the boundary given by lines $AC$ and $BD$ and the boundary lines corresponding to different interface vertices do not cross (see Fig.~\ref{fig:super_3}), the above scheme does not introduce overhearing, thus preserving the reduction. The polynomial nature of the reduction can be shown  based on the following lemma.
\begin{figure}[htbp]
\centering
\begin{psfrags}
\psfrag{a}[c]{\small $C_j^2$}
\psfrag{v}[c]{\small $v_j^s$}
\scalefig{0.3}\epsfbox{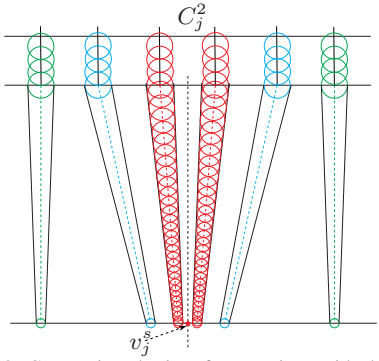}
\end{psfrags}
\caption{Connecting the interface vertices with the grid.}
\label{fig:super_3}
\end{figure}
\begin{lemma}\label{lma:super}
Consider the geometrical scheme described in Fig.~\ref{fig:gridinterfacegeo}. Assume $\angle CAB \geq\frac{\pi}{4}$. The number of circles $\{E_i\}$, denoted by $k$, satisfies $k\leq \frac{2(\log R-\log R_2)}{R_1-R_2}L+1$ when $R_1\neq R_2$, and $k\leq \frac{2L}{R_1}+1$ when $R_1=R_2$, where $R_1$, $R_2$ denote the radiuses of circles $O_1$ and $O_2$, and $L$ the distance between lines $AB$ and $CD$.
\end{lemma}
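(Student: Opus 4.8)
The plan is to recognize the construction as a classical chain of circles inscribed in a wedge and to read off the count from the geometric decay of the radii. First I would fix coordinates so that the chords $AB$ and $CD$ are horizontal, $O_2$ lies below $O_1$, and the two boundary lines $AC$ and $BD$ (joining the left and right extremes of circles $O_1$ and $O_2$) meet, when $R_1\neq R_2$, at an apex below $O_2$, forming a wedge of half-angle $\theta$. The first identity I would record is $\angle CAB=\frac{\pi}{2}-\theta$: since $AB$ is horizontal and the displacement $AC$ equals $(R_1-R_2,-L)$ in these coordinates, $\cos\angle CAB=\frac{R_1-R_2}{\sqrt{(R_1-R_2)^2+L^2}}=\sin\theta$. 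The hypothesis $\angle CAB\geq\frac{\pi}{4}$ is therefore exactly $\theta\leq\frac{\pi}{4}$, equivalently $R_1-R_2\leq L$, and this is essentially where the hypothesis enters.

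Next I would establish the geometric recurrence governing the radii. For a circle centred on the axis $O_1O_2$ at distance $\ell$ from the apex, tangency with the wedge walls gives radius $\ell\sin\theta$; since each $E_{i+1}$ is defined as the intersection of circle $E_i$ with the axis on the apex side, the centres satisfy $\ell_{i+1}=\ell_i-r_i=\ell_i(1-\sin\theta)$, so $r_{i+1}=r_i(1-\sin\theta)$ and the radii form a geometric sequence with ratio $q=1-\sin\theta$. The chain reaches $O_2$ once circle $E_k$ covers it, i.e. $\ell_k(1-\sin\theta)=\ell_1 q^{k}\leq\ell_{O_2}$, so the number of circles is $k=\lceil\log_{1/q}(\ell_1/\ell_{O_2})\rceil\leq\frac{\log(\ell_1/\ell_{O_2})}{-\log(1-\sin\theta)}+1$, the $+1$ coming from the ceiling.

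It then remains to bound the two factors. For the denominator I would use $-\log(1-x)\geq x$, giving $\frac{1}{-\log(1-\sin\theta)}\leq\frac{1}{\sin\theta}$, and then $\sin\theta\geq\frac{R_1-R_2}{\sqrt2\,L}$ from $\theta\leq\frac{\pi}{4}$ (so that $\sqrt{(R_1-R_2)^2+L^2}\leq\sqrt2\,L$); this yields $\frac{1}{\sin\theta}\leq\frac{\sqrt2\,L}{R_1-R_2}\leq\frac{2L}{R_1-R_2}$. For the numerator I would compute $\ell_{O_1}=\frac{R_1L}{R_1-R_2}$, $\ell_{O_2}=\frac{R_2L}{R_1-R_2}$ and $\ell_1=\ell_{O_1}-R_1$, so that $\frac{\ell_1}{\ell_{O_2}}=\frac{R_1}{R_2}\bigl(1-\frac{R_1-R_2}{L}\bigr)\leq\frac{R_1}{R_2}$ and hence $\log(\ell_1/\ell_{O_2})\leq\log R_1-\log R_2$. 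Multiplying the two bounds gives $k\leq\frac{2(\log R_1-\log R_2)}{R_1-R_2}L+1$. The degenerate case $R_1=R_2$ is the parallel-strip limit ($\theta=0$, no apex): the circles all share the constant radius equal to half the perpendicular width of the strip, which by $\angle CAB\geq\frac{\pi}{4}$ is at least $R_1/\sqrt2$, and a direct linear count gives $k\leq\frac{2L}{R_1}+1$, matching the stated bound.

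I expect the main obstacle to be the non-isosceles configuration in which $O_1$ and $O_2$ have different horizontal coordinates, so that $O_1O_2$ is not the angle bisector of the wedge and the clean identity $r=\ell\sin\theta$ for axis-centred circles fails. To handle this I would either argue that the interface placement in the construction makes the configuration symmetric, or replace $\sin\theta$ by the sine of the smaller of the two wall angles and re-derive the recurrence as a sub-geometric decay, checking that $\angle CAB\geq\frac{\pi}{4}$ still controls the governing angle so that only the leading constant degrades (absorbed by the factor $2$ in place of $\sqrt2$). A secondary point to keep honest is the logarithm base: the estimate $-\log(1-x)\geq x$ forces the natural logarithm, so I would read $\log$ as $\ln$ throughout.
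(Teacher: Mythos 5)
Your counting machinery coincides with the paper's: a chain of circles inscribed in a wedge whose distances to the apex decay geometrically with ratio $1-\sin(\cdot)$, a logarithmic count with the $+1$ from the stopping rule, the bound $-\log(1-x)\geq x$, a numerator bound of $\log R_1-\log R_2$ (the paper obtains it from the similar-triangle identity $OO_2/OO_1=R_2/R_1$), and the hypothesis $\angle CAB\geq\frac{\pi}{4}$ entering through a lower bound of order $\frac{R_1-R_2}{2L}$ on the governing sine. Your treatment of $R_1=R_2$ (tilted parallel strip, radius at least $R_1/\sqrt{2}$, axis length at most $\sqrt{2}L$) is also correct and matches the paper's.

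The gap is precisely the case you flag at the end, and it cannot be waved off. Your main derivation for $R_1\neq R_2$ assumes the symmetric configuration in which $O_1O_2$ is vertical, so that $AC=(R_1-R_2,-L)$ and the axis is the bisector of the wedge, giving the clean inradius formula $r=\ell\sin\theta$. In the construction of Fig.~\ref{fig:gridinterfacegeo} this essentially never happens: $O_1$ sits on an evenly spaced vertical grid line while $O_2$ is an interface vertex whose horizontal position follows the exponential spacing of the anchor vertices, so $O_2$ is generically not below $O_1$ and the axis is not the bisector. Hence your fallback (a) (argue the construction is symmetric) is unavailable, and everything rests on fallback (b), which you leave as a sketch whose ``check'' is in fact the only nontrivial step of the proof. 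Two things need to be supplied there. First, the recurrence in the asymmetric wedge is still exactly geometric, not merely ``sub-geometric'': with $\alpha=\min(\angle O_1OA,\angle O_1OB)$ at the apex $O$, an axis point at distance $\ell$ from $O$ has distance $\ell\sin\alpha$ to the nearer wall, so all circles are tangent to the same wall and $\ell_{i+1}=\ell_i(1-\sin\alpha)$. Second, and this is the crux, one must show $\angle CAB\geq\frac{\pi}{4}$ forces $\sin\alpha\geq\frac{R_1-R_2}{2L}$ in the asymmetric case. The paper does this by the law of sines in triangle $OAO_1$, namely $\sin\alpha=\frac{R_1}{OO_1}\sin\angle OAB\geq\frac{R_1}{OO_1}\cdot\frac{\sqrt{2}}{2}=\frac{R_1-R_2}{O_1O_2}\cdot\frac{\sqrt{2}}{2}$, combined with $O_1O_2=L/\sin\theta\leq\sqrt{2}L$; your bound $\sin\theta\geq\frac{R_1-R_2}{\sqrt{2}L}$, derived from the symmetric displacement vector, does not apply to the circles actually constructed. (Note also that this law-of-sines argument bounds the angle of the wall through $A$; if the tangent wall is $BD$, the analogous bound needs the angle at $B$, which is where the paper's ``without loss of generality'' quietly relies on the application making both walls nearly vertical --- an issue your bisector-based version never even sees.) As written, then, your proof establishes the lemma only in a special configuration excluded by the very scheme the lemma is meant to analyze.
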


\begin{proof}
Assume first $R_1\neq R_2$. Without loss of generality, assume $R_1>R_2$. Since line $AB$ and $CD$ are parallel, the three lines $AC$, $O_1O_2$ and $BD$ intersect at one point, denoted by $O$ in Fig.~\ref{fig:gridinterfacegeo}. Let $\alpha$, $\beta$, and $\theta$ denote the angles $\angle O_1 O A$, $\angle OAB$ and $\angle OO_1B$, respectively. 

It can be shown that all the circles $\{E_i\}$ tangent with the same boundary line. Without loss of generality, assume that they tangent with line $AC$, \ie $d_i\leq d_i'$ and $r_i=d_i$. Based on simple geometry, the lengths of the line segments  of $\{OE_i\}$ forms an equal ratio sequence:
\begin{align*}
	OE_{i+1}&=OE_i-r_i=OE_i(1- \sin \alpha),
\end{align*}
with $OE_1=OO_1-R_1$. We thus have
\begin{align*}
OE_{i+1}=(OO_1-R_1)(1-\sin\alpha)^i.
\end{align*}
Based on the stopping condition of the procedure, the number $k$ of circles is given by the minimum index $i$ such that $OE_{i+1}\leq OO_2$. We thus have
\begin{align}
k&=\min \{i\in \mathbb{N}:OE_{i+1}\leq OO_2\}\nonumber\\
&=\min \{i\in \mathbb{N}:(OO_1-R)(1-\sin\alpha)^i\leq OO_2\}\nonumber\\
&=\min \{i\in \mathbb{N}:i\leq \frac{\log\frac{OO_2}{OO_1-R_1}}{\log(1-\sin\alpha)}\}\nonumber\\
&\leq \log(OO_2/OO_1)/\log(1-\sin\alpha)+1.
\label{equ:super_1}
\end{align}

Since $\Delta OO_2D$ and $\Delta OO_1B$ are similar triangles, the ratio $\frac{OO_2}{OO_1}$ equals to the ratio $\frac{R_2}{R_1}$. Also because $-\log (1-x)\geq x$ for $0\leq x\leq 1$, (\ref{equ:super_1}) can be written as
\begin{align}
k&\leq (\log R_1-\log R_2)/\sin\alpha+1.
\label{equ:super_3}
\end{align}
Because $O_1AO$ is a triangle and $\beta\geq \frac{\pi}{4}$, the value of $\sin\alpha$ can be lower bounded as the following:
\begin{align}
\sin\alpha&=\frac{R_1}{OO_1}\sin \beta\geq \frac{R_1}{OO_1}\frac{\sqrt{2}}{2}\geq \frac{R_1-R_2}{O_1O_2}\frac{\sqrt{2}}{2}.
\label{equ:super_2}
\end{align}

Furthermore, since $\theta=\alpha+\beta>\beta$, the length of $O_1O_2=\frac{L}{\sin \theta}$ has a upper bound: $O_1O_2\leq \frac{L}{\sin\beta}\leq \sqrt{2}L$. Hence (\ref{equ:super_2}) leads to:
\begin{align}
\sin\alpha&\geq (R_1-R_2)/(2L).
\label{equ:super_4}
\end{align}
Substituting (\ref{equ:super_4}) into (\ref{equ:super_3}), we have
\begin{align*}
k\leq 2L(\log R_1-\log R_2)/(R_1-R_2)+1.
\end{align*}

Consider next $R_1=R_2$. The sequence of circles $\{E_i\}$ have the same radius $R_1$. Since $O_1O_2=\frac{L}{\sin\theta}<2L$, the bound $k\leq \frac{2L}{R_1}+1$ holds.
\end{proof}

To satisfy the assumption of $\angle CAB\geq\frac{\pi}{4}$ in Lemma~\ref{lma:super}, we set the distance between the last horizontal line of the grid and the horizontal line of super vertices to $n$. This ensures the angle $\angle BAO\leq \frac{\pi}{4}$. Note that in the downward part from the grid to a left interface vertex $\nu_i^-$, $R_1$ is a constant and $R_2=\frac{4^{-n}n}{2}$. Hence the bound on $k$ given in Lemma~\ref{lma:super} can be written as:
\begin{align*}
k&\leq \frac{2(\log R_1+n\log 4-\log n)}{R_1-4^{-n}n/2}L+1\\
&\leq \frac{2(\log R_1+n\log 4)}{R_1-1/8}n+1,
\end{align*}
which is in the order of $\BigO{n^2}$. Similar argument can be made in the upward part where $R_1=4^{-n}n$ and $R_2$ is a constant. The same holds for $R_1=R_2$. Hence the total number of additional vertices to connect the grid to the interface vertices of a super vertex is in the order of $\BigO{n^3}$.

\vspace{\SubsectionVspace}
\subsection{Reduction from MDS to TP in the 2-D Disk Hypergraph $H_2$}
\label{subsec:appA-mp}
With $H_2$ constructed, we now establish the correctness of the reduction from the MDS in $G$ to the TP from $v_1$ to $v_{n+1}$ in $H_2$.
\begin{lemma}\label{lma:dh}
  Let $n_s=n_2+1$ where $n_2$ is the total number of normal vertices in $H_2$. There is a dominating set with size $k$ in $G$ if and only if there is a path from $v_1$ to $v_{n+1}$ in $H_2$ with width between $kn_s$ and $(k+1)n_s-1$. 
\end{lemma}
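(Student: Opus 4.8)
The plan is to derive Lemma~\ref{lma:dh} from the already-established correspondence of Lemma~\ref{lma:gelh} by isolating the only effect that the passage from $H_1$ to $H_2$ has on path width: the insertion of \emph{structural} (non-super) normal vertices, whose total number I keep strictly below $n_s$. First I would fix notation and read $n_2$ as the number of normal vertices of $H_2$ that do not belong to any super vertex (namely $v_1,\dots,v_{n+1}$ together with all relay, cross, anchor, interface, bridging, and grid-interface vertices added in the construction). This quantity does not depend on the super-vertex size, so $n_s=n_2+1$ is well defined; and by construction every covered super vertex still contributes exactly $n_s$ points to the width while every uncovered one contributes none.

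Next I would establish the structural correspondence: a path from $v_1$ to $v_{n+1}$ in $H_1$ covering exactly $k$ super vertices exists if and only if a path from $v_1$ to $v_{n+1}$ in $H_2$ covering exactly $k$ super vertices exists, with the set of covered super vertices preserved. This is where I expect the main work to lie, and it is precisely what the preceding subsections were built to guarantee: every $v_1$-to-$v_{n+1}$ traversal in $H_2$ still visits $v_1,\dots,v_{n+1}$ in order and must pass through one super vertex before reaching the next $v_{i+1}$ (so the choice of hyperedge in $H_1$ is faithfully mirrored), while the directed crosses and the directed overhearing around each super vertex ensure that a segment rooted at $v_j$ is never overheard by a vertex implementing a segment rooted at an earlier $v_i$. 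Combined with the atomic coverage of each super vertex (all $n_s$ coincident zero-range points, or none) and the fact that the grid-interface connectors introduce no overhearing, this yields a correspondence that preserves the number of super vertices covered. I would assemble this claim by citing each of these design features rather than re-deriving them.

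Finally I would close with the counting step, the genuinely new ingredient. For any path $L$ in $H_2$ covering $k$ super vertices, its width decomposes as $W(L)=kn_s+s$, where $s$ is the number of structural normal vertices on $L$; since $0\le s\le n_2=n_s-1$, we get $W(L)\in[kn_s,(k+1)n_s-1]$, and these intervals are disjoint across $k$, so the width determines $k=\lfloor W(L)/n_s\rfloor$ uniquely. Chaining the equivalences then gives the lemma: a dominating set of size $k$ in $G$ exists iff (Lemma~\ref{lma:gelh}) a path in $H_1$ of width $kn_s+n+1$, that is, covering exactly $k$ super vertices, exists iff (the correspondence) a path in $H_2$ covering exactly $k$ super vertices exists iff (the counting step) a path in $H_2$ of width in $[kn_s,(k+1)n_s-1]$ exists. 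The only quantitative subtlety, that the structural contribution never reaches $n_s$, is exactly what the choice $n_s=n_2+1$ secures.
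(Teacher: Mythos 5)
Your proposal is correct and follows essentially the same route as the paper's own (much terser) proof: the paper likewise argues that the choice $n_s=n_2+1$ makes the width of any $v_1$-to-$v_{n+1}$ path determined by (``dominated by'') the number of super vertices it covers, and then invokes the argument of Lemma~\ref{lma:gelh} together with the construction of $H_2$ for the correspondence between covered super vertices and dominating sets. Your explicit decomposition $W(L)=kn_s+s$ with $0\le s\le n_s-1$ and the disjointness of the resulting intervals is simply a spelled-out version of that same counting argument, and your reading of $n_2$ as the number of non-super normal vertices is the intended (and only non-circular) one.
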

\begin{proof}
 The chosen value of $n_s$ ensures that the width of a path from $v_1$ to $v_{n+1}$ is dominated by the number of super vertices that it covers. The correctness of the reduction thus follows from the same arguments in the proof of Lemma~\ref{lma:gelh} based on the construction of~$H_2$.
\end{proof}
The polynomial nature of the reduction is clear from the construction of $H_2$. We thus arrive at Theorem~\ref{thm:NP-2ddh}.

\vspace{\SectionVspace}
\section{Proof of Lemma~\ref{lma:expose}}
\label{app:expose_base}
Consider a TP problem from $s$ to $t$ in a $k$-D exposed disk hypergraphs $H=(V,E)$. We construct a $k$-D UDH $H'$ as follows. First, the normal vertex set $V'$ of $H'$ is given by $V$, except that the ranges of any $v'\in V'$ equals to $\max_{v\in V} R_v$. Next, for each vertex $v'\in V'$, we place a super vertex in $\Phi_v$ (i.e., the exposed area of the corresponding vertex in $H$) that contains $|V|+1$ normal vertices located sufficiently\footnote{The $|V|+1$ normal vertices are sufficiently close such that any transmission from one of these vertices to a vertex outside this super vertex reaches all the $|V|+1$ normal vertices in this super vertex.} close to each other. The super vertices have the same range as the normal vertices in $V'$, ensuring $H'$ is a UDH. The reduction can thus be seen by noticing that while the enlarged ranges introduce additional hyperedges in $H'$, these hyperedges cannot be on a thinnest path due to the fact that they all contain at least one super vertex.

\vspace{\SectionVspace}
\section{Proof of Lemma~\ref{lma:NP-2dedh}}
\label{app:expose}
In this proof, we modify the $2$-D disk hypergraph $H_2$ in the proof of Theorem~\ref{thm:NP-2ddh} to a $2$-D exposed disk hypergraph $H_3$ while preserving the polynomial reduction. Based on the definition, a sufficient condition for a $2$-D disk hypergraph to be exposed is that none of the maximum range disks are completely inside any other. The vertices in $H_2$ for realizing the line segments of the grid satisfy this condition. We only need to modify the implementations of the crosses and around the super vertices. 

\vspace{\SubsectionVspace}
\subsection{Implementation of Crosses}
\begin{figure}[htbp1]
\centering
\begin{psfrags}
\psfrag{a}[c]{(a)}
\psfrag{b}[c]{(b)}
\psfrag{c}[c]{(c)}
\psfrag{d}[c]{(d)}
\psfrag{R}[c]{\small $R$}
\psfrag{A}[c]{\small $A$}
\psfrag{B}[c]{\small $B$}
\psfrag{C}[c]{\small $C$}
\psfrag{D}[c]{\small $D$}
\psfrag{E}[c]{\small $E$}
\psfrag{r1}{\tiny $BC=R_B=R_C=R\tan \theta \simeq 0.5543R$}
\psfrag{BD}{\tiny $BD=(1-\frac{1}{2\cos \theta})R \simeq 0.4283R$}
\psfrag{theta}{\tiny $\theta=29^o$}
\scalefig{0.4}\epsfbox{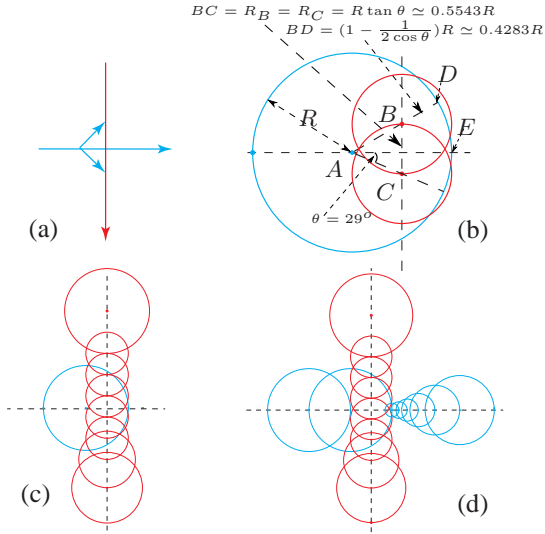}
\end{psfrags}
\caption{To implement a directed cross shown in (a), we first implement a vertex for the blue line with maximum range $R$ at location $A$ (the blue circle) shown in (b). Next we draw a perpendicular bisector between $A$ and $E$ (the right intersecting point of the circle with the line). On this vertical line, we find two points $B$ and $C$ such that $\angle BAE=\angle CAE=29^o$. At each point, we put a vertex for the red line with radius equal to the length of $BC$ (illustrated by the two red circles in (b)). Simple geometry calculation leads to $BD<BC<AB=BE$. This ensures that vertices $B$ and $C$ are exposed yet cannot overhear vertices located at $A$ and $E$. We complete the implementation by adding vertices on the vertical line $BC$ and the horizontal line $AE$ (see (c) and (d)). Note that to preserve the exposure of vertices $B$ and $C$, the maximum ranges of vertices from point $E$ to the right side need to be enlarged gradually to the constant maximum range of normal vertices on the grid (this only requires a constant number of additional vertices). } 
\label{fig:expose_cross}
\end{figure}

In the implementation of directed crosses in $H_2$ (see Fig.~\ref{fig:cross_1}), some vertices on the line with a lower level index may have an empty exposed area (see the red disks in Fig.~\ref{fig:cross_1} that are completely covered by blue ones).    To implement a direct cross in a $2$-D exposed disk hypergraph, the maximum ranges of vertices on the line with a lower level index need to be small enough to preserve the direction of the cross but also large enough to make the vertices  exposed. We propose the scheme described in Fig.~\ref{fig:expose_cross}.

\vspace{\SubsectionVspace}
\subsection{Implementation around Super Vertices }
In the previous implementation around a super vertex $v_j^s$, all the vertices are exposed except the anchor vertices $\{u_i^-,u_i^+\}$ side and the bridging vertices $\{\mu_{ik}\}$. However, we notice that these vertices would all be exposed if there were no interface vertices. Our solution is thus to move all the interface vertices  away from its original position with a constant distance and adding a constant number of vertices to connect each new interface vertex to the bridging vertex or the anchor vertex on the right side. A detailed implementation is shown in Fig.~\ref{fig:expose_super}.
\begin{figure}[htbp1]
\centering
\begin{psfrags}
\scalefig{0.4}\epsfbox{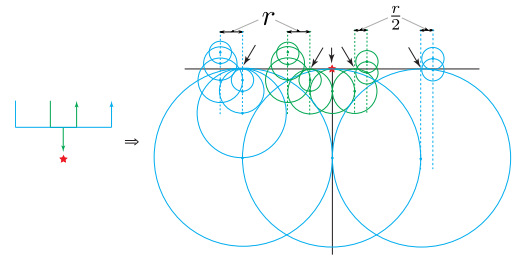}
\end{psfrags}
\caption{An interface vertex on the left side is replaced by three vertices with maximum ranges $r$, $\frac{2}{3}r$ and $\frac{r}{2}$, respectively. These three vertices are located on a vertical line to the left side of the original location of the interface vertex with a distance of $r$. An interface vertex on the right side is replaced by two vertices with maximum range $\frac{r}{2}$ located on a vertical line to the right side of the original location of the interface vertex with distance $\frac{r}{2}$. Under this implementation, the exposed areas of the anchor and bridging vertices are right above the point where they tangent with the horizontal line of the super vertices (as illustrated by the arrows). }
\label{fig:expose_super}
\end{figure}

\vspace{\SectionVspace}
\section{Proof of Theorem~\ref{thm:NP_3dudg}}\label{app:NP_3dudg}
Consider an MDS problem in a graph $G$ with a maximum degree of $3$. We first follow the first two steps in the proof of Theorem~\ref{thm:NP-2ddh} to build the grid representation of hypergraph $H_1$. Note that due to the unit range of all vertices, we set the size of the grid to a constant greater than $1$ (say, $5$) to avoid unwanted overhearing. Next, we implement this representation in a $3$-D UDG while preserving the reduction. Any line segment of hyperedges in $H_1$ is replaced by a sequence of unit disks, one tangent to another. Any cross between two line segments can be easily implemented by using the third dimension, as shown in Fig.~\ref{fig:3udg_cross}. In this implementation, there is no overhearing between vertices on these two line segments at all. Since $G$ has a maximum degree of $3$, there are at most $4$ hyperedges passing through a super vertex. It can be easily implemented without any unwanted overhearings (see Fig.~\ref{fig:3udg_super}). To prevent the super vertices from relaying messages, we place a \emph{mega} vertex besides each super vertex. This mega vertex is only within the range of this super vertex and contains more normal vertices than the total number of normal vertices in the reduced graph (including the normal vertices contained in all the super vertices but not those in other mega vertices). In this way, a path via any super vertex covers at least one mega vertex, thus cannot be the thinnest path. Fig.~\ref{fig:3udg_super} illustrates the implementation around a super vertex\footnote{We can consider reduction from MDS in graphs with a maximum degree up to $9$. In this case, there are at most $10$ incoming hyperedges. Along with the mega vertex, they can be packed around a super vertex without overhearing.}. The correctness of the reduction follows from the same arguments as in the proof of Lemma~\ref{lma:gelh}.

\begin{figure}[htbp]
%\centering
\begin{psfrags}
\scalefig{0.45}\epsfbox{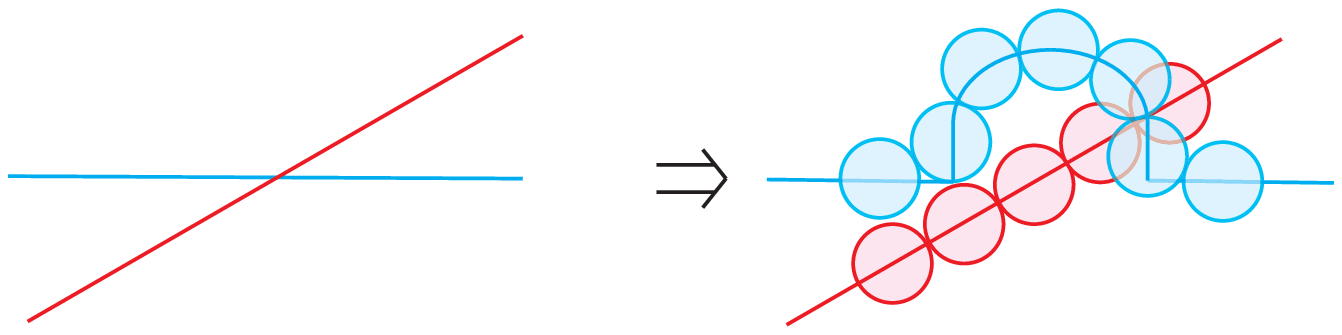}
\end{psfrags}
\caption{An implementation of a cross in $3$-D UDG.}
\label{fig:3udg_cross}
\end{figure}

\begin{figure}[htbp]
\centering
\begin{psfrags}
\psfrag{mega}[c]{\tiny Mega vertex}
\scalefig{0.4}\epsfbox{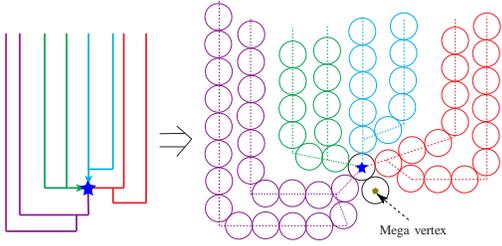}
\end{psfrags}
\caption{Implementation around the super vertices in UDG. }
\label{fig:3udg_super}
\end{figure}

\vspace{\SectionVspace}
\section{Proof of Theorem~\ref{thm:NBI_property}}\label{app:F}
We first show that as long as there exists a path from $s$ to $t$, there exists a path from $s$ to $u_{l}$ that traverses only in the sub-hypergraph $H'$. This can be shown by noticing that $u_{l}$ must hear the message from $s$ before $u_{l-1}$ and any vertex to the right of $u_{l-1}$. This is due to the monotonicity of wireless broadcast and the definition of predecessor. Consequently, there must exist a path from $s$ to $u_l$ in $H'$. 
Since $V'$ is covered by the hyperedge leading from $u_l$ to $u_{l-1}$ in $L_1$, the concatenation of $L_1$ with any path to $u_l$ in $H'$ covers the same set of vertices. Specifically, the cover of the path returned by NBI is the set of vertices located between (and including) $u_l$ and $t$. Since any path from $s$ to $t$ covers this set of vertices, the correctness of the algorithm is established. 

Next, we prove the property of $A_{L^*}$ under the disk propagation model. We first state the following lemma that follows directly from triangle inequality.
\begin{lemma}\label{lma:cir}
Let $D_1$ and $D_2$ denote two closed balls in $\mathbb{R}^d$ with radiuses $r_1$ and $r_2$, respectively. Let $a$ denote the distance between the centers of $D_1$ and $D_2$. If $0\leq a\leq |r_1-r_2|$, then $D_2\subset D_1$.
\end{lemma}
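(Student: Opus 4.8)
The plan is to prove the containment directly from the triangle inequality, exactly as the surrounding text anticipates. Let $c_1$ and $c_2$ denote the centers of $D_1$ and $D_2$, so that $a=d(c_1,c_2)$. Since the hypothesis $0\leq a\leq |r_1-r_2|$ is symmetric in the two radii while the desired conclusion $D_2\subset D_1$ is not, I would first observe that we may take $D_1$ to be the larger ball, \ie $r_1\geq r_2$ (if instead $r_2>r_1$, one simply interchanges the roles of the two balls and concludes $D_1\subset D_2$). Under $r_1\geq r_2$ the hypothesis reads $a\leq r_1-r_2$.

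The single substantive step is then a triangle-inequality estimate at an arbitrary point of $D_2$. Fix any $x\in D_2$, so $d(x,c_2)\leq r_2$. Routing through the center $c_2$ gives
\[
d(x,c_1)\leq d(x,c_2)+d(c_2,c_1)\leq r_2+a\leq r_2+(r_1-r_2)=r_1,
\]
whence $x\in D_1$. As $x$ was arbitrary, this yields $D_2\subseteq D_1$, which is the claim.

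There is essentially no obstacle here beyond the bookkeeping: the only point requiring care is the orientation of the inclusion, since the stated conclusion $D_2\subset D_1$ is meaningful only once one fixes $D_1$ as the ball of the larger radius (consistent with how the lemma is later invoked, where $D_1$ plays the role of the maximum-range disk that swallows the smaller one). The argument is dimension-free, relying solely on the fact that $d$ is a metric, so it holds verbatim in $\mathbb{R}^d$ for every $d$.
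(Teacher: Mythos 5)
Your proof is correct and uses exactly the approach the paper intends: the paper offers no written argument beyond asserting that the lemma ``follows directly from triangle inequality,'' and your triangle-inequality estimate $d(x,c_1)\leq d(x,c_2)+a\leq r_2+(r_1-r_2)=r_1$ is precisely that argument spelled out. Your observation that the conclusion $D_2\subset D_1$ requires reading $D_1$ as the larger ball (matching how the lemma is invoked in Appendix~F) is a fair and harmless clarification of the statement.
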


Based on Lemma~\ref{lma:cir}, for any vertex $v$ between $u_l$ and $u_{l-1}$, we have $D_{v,R_v}\subset A_{u_l,d(u_l,u_{l-1})}$. Therefore $A_{L^*}=A_{L_1}=\bigcup_{k=1}^{l} D_{u_{k},d(u_{k},u_{k-1})}$ (let $u_0=t$). Next, consider an arbitrary path $L$ from $s$ to $t$. We show that for any $u_k$ ($k=1,\ldots, l$), $D_{u_k,d(u_k,u_{k-1})}\subset A_L$. Specifically, since $u_{k-1}$ must first hear the message from $u_k$ or a vertex to the left of $u_k$, $D_{u_k,d(u_k,u_{k-1})}$ is a subset of the covered area of this hop in $L$ based on Lemma~\ref{lma:cir}. This completes the proof.

\vspace{\SectionVspace}
\section{Proof of Theorem~\ref{thm:approx}}\label{app:SPBA}
\subsection{For General Directed Hypergraphs}

Let $L_1$ denote the path from $s$ to $t$ provided by SPBA and $L_{opt}=\{e_1,e_2,\ldots,e_k\}$ the thinnest path. If multiple thinnest path exists, let $L_{opt}$ be the one with the minimum number of hyperedges. Let $\mathcal{L}(L)$ denote the length (\ie the sum of hyperedge weights) of $L$.

Since each vertex covered in $L_1$ (except the source $s$) contributes to the weight of at least one hyperedge in $L_1$, the width $W(L_1)$ is no larger than the length of this path plus one. Also because $L_1$ is the shortest path, its length is no larger than the length of $L_{opt}$. We thus have
\begin{align}
W(L_1)\leq \mathcal{L}(L_1)+1\leq \mathcal{L}(L_{opt})+1.
\label{equ:approx_spba_basic}
\end{align}
We then obtain the approximation ratio by deriving an upper bound of $\mathcal{L}(L_{opt})$ as a function of $W(L_{opt})$.

Note that the destination set $T_e$ of hyperedge $e_i$ on $L_{opt}$ cannot contain $k-i+1$ vertices: its own source vertex $s_{e_i}$ and vertices in $\{s_{e_{i+2}},s_{e_{i+3}},\ldots,s_{e_k},t\}$. The later holds because otherwise $L_{opt}$ is not the the thinnest path with minimum number of hyperedges. We thus have
\begin{align}
\mathcal{L}(L_{opt})&\leq \sum_{i=1}^{k} (W(L_{opt})-(k-i+1))\nonumber\\
&= kW(L_{opt})-k(k+1)/2\nonumber\\
&\leq W(L_{opt})(W(L_{opt})-1)/2,
\label{equ:approx_spba_1}
\end{align}
where (\ref{equ:approx_spba_1}) comes from $k\leq W(L_{opt})-1$. Substituting (\ref{equ:approx_spba_1}) into (\ref{equ:approx_spba_basic}), we have
\begin{align}
W(L_1)&\leq W(L_{opt})(W(L_{opt})-1)/2+1\nonumber\\
&\leq W^2(L_{opt})/2,
\label{equ:approx_spba_2}
\end{align}
where (\ref{equ:approx_spba_2}) holds since $W(L_{opt})\geq 2$.

Based on (\ref{equ:approx_spba_2}), if $W(L_{opt})\leq \sqrt{2n}$, then $W(L_1)\leq \frac{1}{2}W^2(L_{opt})\leq \sqrt{\frac{n}{2}}W(L_{opt})$. Otherwise, we have $W(L_1)\leq n\leq \sqrt{\frac{n}{2}}W(L_{opt})$. In summary, SPBA provides a $\sqrt{\frac{n}{2}}$ approximation.

\vspace{\SubsectionVspace}
\subsection{For Ring Hypergraphs}
Since a ring hypergraph is a special directed hypergraph, all the analysis in the previous subsection applies. Specifically, inequality (\ref{equ:approx_spba_basic}) holds. The problem then remains on obtaining a tighter upper bound of $\mathcal{L}(L_{opt})$ based on the geometrical properties of ring hypergraphs.

First, note that the length of a hyperpath $L$ equals to the sum of the number of times each vertex is reached. Let $E_v$ denote the set of hyperedges on $L_{opt}$ that include $v$ in their destination sets, \ie
\begin{align*}
  E_v&\defeq\{e\in L_{opt}:v\in T_e\}.
\end{align*}
Now we construct a subset $E'_v$ of $E_v$ by iteratively removing one from any pair of hyperedges whose positions in $L_{opt}$ are adjacent until no such pair exists. Because at most half of the hyperedges are removed from $E_v$, the size of $E'_v$ is at least half of the size of $E_v$, in another word $|E_v|\leq 2|E'_v|$.

Let $R_{max}$ and $R_{min}$ denote the largest maximum range and the smallest minimum range among all vertices in the given ring hypergraph $H_r$, respectively. Let $R'_{min}$ be the larger one between $R_{min}$ and the smallest distance between any two vertices in $H_r$. Based on the construction of $E'_v$, the set of source vertices of hyperedges in $E'_v$ satisfies two properties. First, based on the definition of ring hypergraphs, the distance between any source vertex in the set and $v$ is no larger than the maximum range of this vertex and hence no larger than $R_{max}$. Second, the distances between any two source vertices in the set are larger than $R_{min}$ and hence $R'_{min}$. Otherwise the two hyperedges rooted at these two vertices can reach the source vertex of each other and hence they are adjacent in $L_{opt}$ (recall that $e_i\in L_{opt}$ cannot reach any vertex in $\{s_{e_{i+2}},s_{e_{i+3}},\ldots,s_{e_k}\}$).

Given these two properties, the size of $E'_v$ thus is upper bounded by the maximum number of points in the Euclidean space that are at most $R_{max}$ away from $v$ and at least $R'_{min}$ apart from each other. This is equivalent to a sphere packing problem of arranging the maximum number of small spheres with radius ${R'_{min}}/{2}$ inside a large sphere with radius $R_{max}+{R'_{min}}/{2}$. An upper bound of this packing problem is the ratio between the volumes of the large and small spheres. We thus have:
\begin{align*}
|E'_v|\leq \frac{(R_{max}+R'_{min}/2)^d}{(R'_{min}/2)^d}=(1+2\alpha)^d,
\end{align*}
where $\alpha={R_{max}}/{R'_{min}}$. Recall that $|E_v|\leq 2|E'_v|$. Note that the destination $t$ can only be reached by the last hyperedge $e_k$ and hence $|E_t|=|\{e_k\}|=1$. We thus have
\begin{align}
  \mathcal{L}(L_{opt})&=\sum_{v\in \widehat{L}_{opt}\backslash \{t\}} |E_v|+|E_t|\\
  &\leq 2(1+2\alpha)^d (W(L_{opt})-1)+1\nonumber\\
  &\leq 2(1+2\alpha)^d W(L_{opt})-1.
  \label{equ:approx_spba_ring_2}
\end{align}
Substituting (\ref{equ:approx_spba_ring_2}) into (\ref{equ:approx_spba_basic}). we have
\begin{align}
  W(L_1)&\leq \mathcal{L}(L_{opt})+1\leq 2(1+2\alpha)^d W(L_{opt}),\label{equ:sp_ring_ratio}
\end{align}
\ie SPBA provides a $2(1+2\alpha)^d$-approximation for TP in $d$-D ring hypergrpahs.

\vspace{\SubsectionVspace}
\subsection{Asymptotic tightness}
We now prove that $\sqrt{\frac{n}{2}}$-ratio is asymptotically tight even for $2$-D disk hypergrpahs. The proof has two steps. First, we construct a directed hypergraph $H$ for which the worst case ratio is asymptotically reached. Next, we show a 2-D disk implementation of $H$.

Consider the the following hypergraph $H$ illustrated in Fig.~\ref{fig:example_spba} with $k$ red vertices $v_1,\ldots,v_{k}$ and $k'$ blue vertices $u_1,\ldots,u_{k'}$ along with the source $s$ and the destination $t$. Each red vertex $v_i$ has one outgoing hyperedge $e$ with $T_{e}=\{v_1,\ldots,v_{i-1},v_{i+1}\}$ (let $v_{k+1}$ denote $t$). Each blue $u_i$ has one outgoing hyperedge~$e$ with $T_{e}=\{u_{i+1}\}$ (let $u_{k'+1}$ denote $t$). Finally, we add two hyperedges that connect source $s$ to $v_1$ and $u_1$ respectively. 

\begin{figure}[htbp]
\centering
\psfrag{s}[c]{$s$}
\psfrag{t}[c]{$t$}
\psfrag{v1}[c]{$v_1$}
\psfrag{v2}[c]{$v_2$}
\psfrag{v3}[c]{$v_3$}
\psfrag{vk}[c]{$v_k$}
\psfrag{u1}[c]{$u_1$}
\psfrag{u2}[c]{$u_2$}
\psfrag{u3}[c]{$u_3$}
\psfrag{uk}[c]{$u_{k'}$}
\scalefig{0.3} \epsfbox{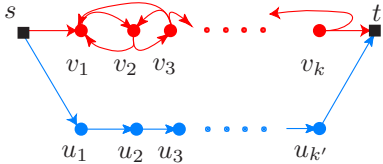}
\caption{A worst case scenario for SPBA.}\label{fig:example_spba}
\end{figure}

Let $k'=k(k+1)/2+1$. Since the shortest path traverses  through the blue hyperedges while the thinnest path through the red ones, the approximation ratio is given by:
\begin{align}
\gamma(k)={(k^2+k+2)}/{(2k+4)}.
\label{eq:spba}
\end{align}
Note that the total number of vertices is
\begin{align*}
n=k+k'+2=k+k(k+1)/2+1.
\end{align*}
When $n$ is large, $k\sim\sqrt{2n}$ and $\alpha(k)\sim\frac{k}{2}\sim\sqrt{\frac{n}{2}}$.

Next, we implement the above hypergraph under a 2-D disk model as illustrated in Fig.~\ref{fig:spdhexample}. The red vertices are located on a straight line with $R_{v_1}=R_{v_2}=1,\,R_{v_i}=2^{i-2}$ for $i>2$. The source vertex $s$ is located on the line to the left of $v_1$, and both its maximum range and its distance to $v_1$ equals to $R_{v_k}$.  The terminal vertex $t$ has a maximum range of $0$ and is located to the right of $v_k$ with a distance of $R_{v_k}$. The maximum range of a blue vertex $u_i$ is $R_{v_k}-i \epsilon$ where $\epsilon$ is a small positive value to prevent $u_{i-1}$ from  overhearing messages transmitted by $u_i$. And the blue vertices are located on a route from $s$ to $t$ that contains two vertical line segments of length $(1+l)R_{v_k}$ and a horizontal one of length $3R_{v_k}$, as demonstrated by the blue dash lines in Fig.~\ref{fig:spdhexample}. The positive parameter $l$ is used to prevent a blue vertex from  overhearing the last red vertex $v_k$. In the asymptotic regime with large $k$, $l$ can be set sufficiently large so that the $k'$ blue vertices can be implemented along the depicted route from $s$ to $t$. 

\begin{figure}[htbp]
\centering
\psfrag{s}[c]{$s$}
\psfrag{t}[c]{$t$}
\psfrag{R1}[c]{$(1+l)R_{v_k}$}
\psfrag{R2}[c]{$3R_{v_k}$}
\scalefig{0.3} \epsfbox{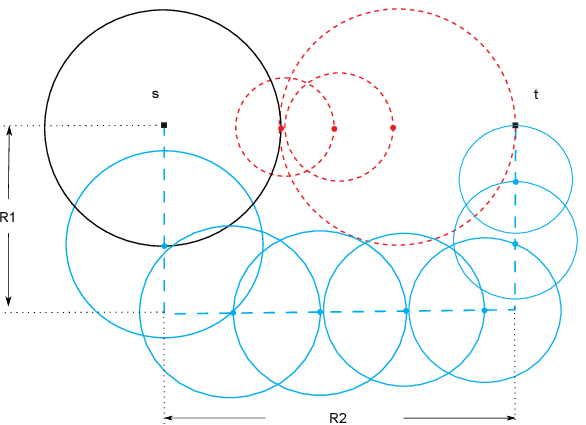}
\caption{A 2-D disk implementation of the worst case scenario for SPBA.}\label{fig:spdhexample}
\end{figure}

\vspace{\SectionVspace}
\section{Proof of Theorem~\ref{thm:dpapprox}}\label{app:TSBA}
Let $L_2$ denote the path in hypergraph $H$ from $s$ to $t$ given by the TSBA algortihm and $L_{opt}$ the thinnest path. Let $L_1(v)$ and $L_2(v)$ denote the paths from $s$ to a vertex $v$ given by SPBA and TSBA, respectively. The following lemma establishes a property of $L_2(v)$.
\begin{lemma}
\label{lma:dp_property}
For any hyperedge $e$ in $H$, we have, $\forall v\in T_e$, 
\begin{align*}
W(L_2(v))\leq |\widehat{L}_2(s_e)\cup T_e|.
\end{align*}
\end{lemma}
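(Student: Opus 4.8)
The plan is to recognize that the claimed inequality is precisely the relaxation invariant maintained by TSBA, which runs Dijkstra's algorithm but stores, for each vertex $u$, an entire candidate path $L_2(u)$ together with its width $W(L_2(u))$ used as the priority key, and relaxes a hyperedge by taking a set union rather than adding a scalar weight. First I would fix a hyperedge $e$ rooted at $s_e$ and a destination $v\in T_e$, and note that the path formed by appending $e$ to $L_2(s_e)$ is a valid hyperpath from $s$ to $v$ whose cover is exactly $\widehat{L}_2(s_e)\cup T_e$; hence its width equals $|\widehat{L}_2(s_e)\cup T_e|$. The lemma thus asserts that the path TSBA finally stores at $v$ is no wider than this particular candidate.

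The key steps, in order, are as follows. First I would establish that once $s_e$ is extracted from the priority queue its stored path $L_2(s_e)$ is final. This is the analog of Dijkstra's correctness under non-negative edge weights, and it rests on the \emph{monotonicity of the cover}: extending any hyperpath $L$ by one more hyperedge can only enlarge $\widehat{L}$, so $W$ is non-decreasing along a path. Consequently the vertex extracted with the minimum tentative width cannot have its width lowered by any later relaxation, exactly as in the standard argument. Second, I would invoke the relaxation performed when $s_e$ is processed: for every $v\in T_e$ the algorithm compares the stored $L_2(v)$ with the candidate whose cover is $\widehat{L}_2(s_e)\cup T_e$ and keeps the thinner one, so immediately after this step $W(L_2(v))\leq|\widehat{L}_2(s_e)\cup T_e|$. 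Third, I would note that every relaxation can only replace a stored path by one of no greater width, so the stored width at $v$ is non-increasing throughout execution; therefore the \emph{final} $L_2(v)$ still satisfies $W(L_2(v))\leq|\widehat{L}_2(s_e)\cup T_e|$, which is the claim.

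I expect the main obstacle to be the first step, namely arguing that $L_2(s_e)$ is genuinely final at extraction time, so that the $\widehat{L}_2(s_e)$ appearing in the statement (the final path to $s_e$) is the very set used in the relaxation of $e$. Everything else is bookkeeping, but this is the point where the set-union cost function must be shown to behave like a non-negative additive cost, via the monotone-cover property above. Two minor cases should be dispatched in passing: if $s_e$ is unreachable from $s$ then $L_2(s_e)$ is undefined (equivalently of infinite width) and the inequality holds vacuously; and the possibility that $v$ already lies in $\widehat{L}_2(s_e)$ causes no difficulty, since neither the candidate cover $\widehat{L}_2(s_e)\cup T_e$ nor the validity of the appended path is affected.
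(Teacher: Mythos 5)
Your proposal is correct and follows essentially the same route as the paper, whose entire proof is the one-line remark that the lemma ``follows directly from the tree structure of TSBA'': the inequality is exactly the Dijkstra-style relaxation invariant of the algorithm. Your write-up simply makes explicit what the paper leaves implicit---finality of $L_2(s_e)$ at extraction (via monotonicity of the cover under set union), the relaxation of $e$ at that moment, and the fact that later updates can only shrink the stored width at $v$---so it is a faithful, fleshed-out version of the paper's argument rather than a different one.
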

\begin{proof}
Lemma~\ref{lma:dp_property} follows directly from the tree structure of TSBA. 
\end{proof}

\subsection{For General Directed Hypergraphs}

Let $L_{opt}=\{e_1,\ldots,e_k\}$ denote the thinnest path.
For the ease of presentation, let the sequence of source vertices $s_{e_1},\ldots,s_{e_k}$ and the final destination $t$ be denoted as $v_1,\ldots,v_{k+1}$. 
Let $U=\{v_i\}_{i=1}^{k+1}$.
Based on Lemma~\ref{lma:dp_property}, we have, for all $i=1,\ldots,k$,
\begin{align}
&W(L_2(v_{i+1}))\leq |\widehat{L}_2(v_i)\cup T_{e_i}|\nonumber\\
&\leq W(L_2(v_i))+|T_{e_i}\backslash\{v_1,\ldots,v_{i+1}\}|+1\nonumber\\
&= W(L_2(v_i))+|T_{e_i}\backslash U|+1,
\label{equ:approx_TSBA_1}
\end{align}
where (\ref{equ:approx_TSBA_1}) holds since $T_{e_i}$ does not contain vertices in $\{s_{e_{i+2}},\ldots,s_{e_{k}},t\}$. Summing (\ref{equ:approx_TSBA_1}) over $i$, and noticing $W(L_2(v_1))=1$ and $L_2(v_{k+1})=L_2$, we have:
\begin{align*}
W(L_2)&\leq k+1+\sum_{i=1}^k |T_{e_i}\backslash U|.
\end{align*}
Next, since
\begin{align*}
W(L_{opt})&=k+1+|\cup_{i=1}^k (T_{e_i}\backslash U)|,
\end{align*}
we can upper bound $|T_{e_i}\backslash U|$ by $W(L_{opt})-k-1$ for any $i=1,\ldots,k$. Thus
\begin{align*}
W(L_2)\leq k+1+k(W(L_{opt})-k-1).
\end{align*}
The right side of this inequality is a quadratic function of $k$ with the maximum at $k=W^2(L_{opt})/2$. We thus have
\begin{align*}
W(L_2)&\leq 1+W^2(L_{opt})/4.
\end{align*}

If $W(L_{opt})\leq 2\sqrt{n-1}$, the approximation ratio $\gamma$ is given by
\begin{align}
\gamma&= W(L_{opt})/{4}+{1}/{W(L_{opt})}\leq {n}/{2\sqrt{n-1}}.\label{equ:dp_ratio2_part1}
\end{align}
The inequality holds because the function $\frac{x}{4}+\frac{1}{x}$ is an increasing function for $x\geq 2$.

If $W(L_{opt})>2\sqrt{n-1}$, we have
\begin{align}
\gamma\leq {n}/{W(L_{opt})}\leq {n}/{(2\sqrt{n-1})}.\label{equ:dp_ratio2_part2}
\end{align}

This completes the proof for case of general directed hypergraphs.

\subsection{For Ring Hypergraphs}

Let $L_1(t)=\{e_1,e_2,\ldots,e_k\}$ be the shortest path from $s$ to $t$. Let $v_i$ denote the source vertex of $e_i$ and $v_{k+1}=t$. We prove, through induction, the following inequality for all $i=1,\ldots,k+1$:
\begin{align}
W(L_2(v_i))\leq \mathcal{L}(L_1(v_i))+1.
\label{equ:lma_ratio2}
\end{align}

When $i=1$, (\ref{equ:lma_ratio2}) holds since
\begin{align*}
W(L_2(v_1))=1,\mathcal{L}(L_1(v_1))=0.
\end{align*}

Now assume that (\ref{equ:lma_ratio2}) holds for $i-1$, \ie $W(L_2(v_{i-1}))\leq \mathcal{L}(L_1(v_{i-1}))+1$. Based on Lemma~\ref{lma:dp_property} and this induction assumption, we have:
\begin{align*}
W(L_2(v_i))&\leq |\widehat{L}_2(v_{i-1})\cup T_{e_{i-1}}|\\
&\leq W(L_2(v_{i-1}))+|T_{e_{i-1}}| \\
&\leq  \mathcal{L}(L_1(v_{i-1}))+1+|T_{e_{i-1}}|\\
&=\mathcal{L}(L_1(v_i))+1.
\end{align*}
This completes the induction. Considering $v_{k+1}=t$, we have
\begin{align}
W(L_2)\leq \mathcal{L}(L_1(t))+1.
\label{equ:approx_tsba_ring}
\end{align}

From (\ref{equ:approx_tsba_ring}) and (\ref{equ:approx_spba_basic},\ref{equ:sp_ring_ratio}), we have $W(L_2(t))\leq 2(1+2\alpha)^d W(L_{opt})$, \ie TSBA provides a $2(1+2\alpha)^2$ approximation for ring hypergrpahs.

\vspace{\SubsectionVspace}
\subsection{Asymptotic tightness}
We first construct a directed hypergraph $H$ as illustrated in Fig.~\ref{fig:dpexample1}. The vertex set of $H$ consists of two types of vertices: $k+1$ normal vertices $v_0,\ldots,v_{k-1}$ and $t$, and $k$ super vertices $u_1,\ldots,u_{k-1},u$, each containing $k-1$ normal vertices. Rooted at each normal vertex $v_i$ are two hyperedges $e_{i+1}$ and $e_{i+1}'$. Hyperedge $e_i$ has destination vertices $T_{e_i}=\{v_i,u_i\}$ and hyperedge $e_i'$ has destination vertices $T_{e_i'}=\{v_i,u\}$.

It is easy to see that the thinnest path from $v_0$ to $t$ is $L_{opt}=\{e'_1,e'_2,\ldots,e'_k,e'_{k}\}$ with width $2k$. However, TSBA returns the path $L^d=\{e_1,e_2,\ldots,e_k\}$ with width $k^2+1$ in the worst case\footnote{Note that $v_{i}$ can update $v_{i+1}$ through both $e_{i+1}$ and $e_{i+1}'$ with the same width. Since the order of hyperedges used in the update is arbitrary, $e_{i+1}$ could be used to update $v_{i+1}$ for all $1\leq i\leq k-1$ in the worst case.}. The approximation ratio is
\begin{align*}
\gamma&=\frac{W(L)}{W(L_{opt})}=\frac{k^2+1}{2k}=\frac{n}{2\sqrt{n-1}}.
\end{align*}

\begin{figure}[htbp]
\centering
\psfrag{1}[c]{$v_1$}
\psfrag{2}[c]{$v_2$}
\psfrag{s}[c]{$v_0$}
\psfrag{t}[c]{$t$}
\psfrag{l}[c]{$v_{k-2}$}
\psfrag{k}[c]{$v_{k-1}$}
\psfrag{u1}[c]{$u_1$}
\psfrag{u2}[c]{$u_2$}
\psfrag{uk}[c]{$u_{k-1}$}
\psfrag{m}[c]{$u$}
\psfrag{e1}[c]{$e_1$}
\psfrag{e2}[c]{$e_2$}
\psfrag{ek}[c]{$e_{k-1}$}
\psfrag{e21}[c]{$e'_1$}
\psfrag{e22}[c]{$e'_2$}
\psfrag{e2k}[c]{$e'_{k-1}$}
\psfrag{ek+1}[c]{$e'_{k}$}
\scalefig{0.45} \epsfbox{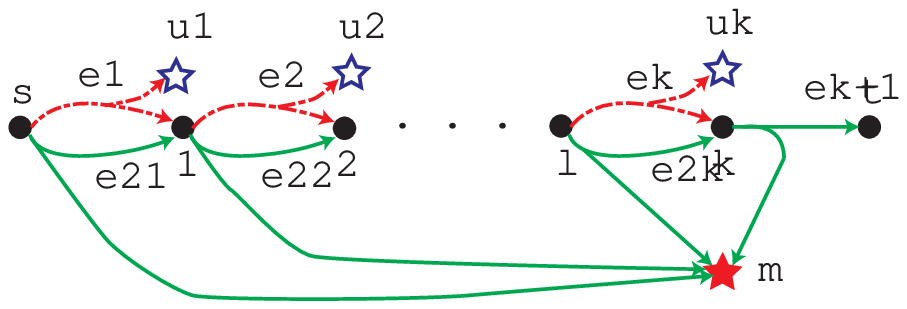}
\caption{A worst case scenario for TSBA.}\label{fig:dpexample1}
\end{figure}

Given the similarity between $H$ and the hypergraph $H_1$ constructed in the proof of Theorem~\ref{thm:NP-2ddh}, we can follow the same approach given in \ref{app:NP_2ddh} to implement $H$ under a 2-D disk model. However, this implementation requires adding additional vertices (referred to as auxiliary vertices) that may render our previous approximation analysis invalid. To maintain the ratio, each original vertex (including the vertices in a super vertex) in $H$ is replaced with $c$ vertices (clustered together) in its 2-D disk implementation, where $c$ is the number of auxiliary vertices introduced by the implementation. In this case, TSBA returns a path that covers $\{u_1,\ldots,u_{k-1},v_0,\ldots,v_k\}$ along with a set of auxiliary vertices. The thinnest path covers $\{u,v_0,\ldots,v_k\}$ and another set of auxiliary vertices. The approximation ratio in this $2$-D disk hypergraph is given by
\begin{align*}
\gamma&=\frac{(k^2+1)c+c'}{2kc+c''}=\frac{k^2+1+\frac{c'}{c}}{2k+\frac{c''}{c}}
\end{align*}
where $c'$ and $c''$ denote the number of auxiliary  vertices covered by the path returned by TSBA and the thinnest path. Since $\frac{c'}{c}\leq 1$ and $\frac{c''}{c}\leq 1$, when $n$ is large, we have $\gamma\sim \frac{n}{2\sqrt{n-1}}$, \ie the approximation ratio $\frac{n}{2\sqrt{n-1}}$ is asymptotically tight in $2$-D disk hypergraphs.

%%%%%%%%%% References %%%%%%%%%%%%%%%%%%%%%%%%%%%%%%%%%%%%%%%%%%%%%%%%%%

\bibliographystyle{ieeetr}
{\footnotesize

}

\end{document}